\pgfplotsset{compat=newest}
\theoremstyle{plain}
\newtheorem{theorem}{Theorem}
\newtheorem{lemma}[theorem]{Lemma}
\newtheorem{proposition}[theorem]{Proposition}
\newtheorem{definition}{Definition}
\theoremstyle{nonumberplain}
\theoremstyle{plain}
\theoremstyle{plain}
\newtheorem{assum}{Assumption}
\theoremstyle{nonumberplain}
\newtheorem{proof}{Proof}
\newlength\fheight
\newlength\fwidth
\newcommand{\norm}[1]{\| #1 \|}
\newcommand{\inn}[2]{\langle #1,#2 \rangle}
\newcommand{\lrbrace}[1]{\left\{ #1 \right\}}
\newcommand{\normi}[1]{{\left\vert\kern-0.25ex\left\vert\kern-0.25ex\left\vert #1 
		\right\vert\kern-0.25ex\right\vert\kern-0.25ex\right\vert}}
\newcommand{\inni}[2]{{\langle\kern-0.25ex\langle #1,#2
		\rangle\kern-0.25ex\rangle}}
\def\boldnormtheta{\boldsymbol{\theta}}
\def\X{\mathcal{X}}
\def\rmf{ \mathrm{f} }
\def\nat{ \mathbb{N} }								
\def\real{ \mathbb{R} }								
\def\A{\mathcal{A}}
\def\P{\mathcal{P}}
\def\rmq{\mathrm{q}}
\def\bs{ \mathbf{s} }									
\def\bC{ \mathbf{C} } 								
\def\bU{ \mathbf{U} }
\def\bq{ \mathbf{q} }  
\def\bss{ \boldsymbol{s} }						
\def\bsw{ \boldsymbol{w} }						
\def\bsx{ \boldsymbol{x} }						
\def\bsp{ \boldsymbol{p} }
\def\bsu{ \boldsymbol{u} }
\def\bss{ \boldsymbol{s} }
\def\bsg{ \boldsymbol{g} }
\def\bsM{ \boldsymbol{M} }
\def\bsy{ \boldsymbol{y} }						
\def\U{ \mathcal{U} }											
\DeclareMathOperator{\dom}{\text{dom}}
\DeclareMathOperator{\relint}{\text{relint}}
\DeclareMathOperator{\SINR}{\text{SINR}}
\newcommand\restr[2]{{
  \left.\kern-\nulldelimiterspace 
  #1 
  \right|_{#2} 
  }}
\DeclareMathOperator*{\argmax}{arg\,max}
\title{Welfare Measure for Resource Allocation with Algorithmic Implementation: Beyond Average and Max-Min}
\author{\IEEEauthorblockN{Ezra Tampubolon and Holger Boche}
	
\IEEEauthorblockA{
  Lehrstuhl f{\"u}r Theoretische Informationstechnik\\
  Technische Universit{\"a}t M{\"u}nchen, 80290 M{\"u}nchen, Germany\\
  \{ezra.tampubolon,boche\}@tum.de}

 }
\begin{document}
%
\maketitle
%
\begin{abstract}
In this work, we propose an axiomatic approach for measuring the performance/welfare of a system consisting of concurrent agents in a resource-driven system. Our approach provides a unifying view on popular system optimality principles, such as the maximal average/total utilities and the max-min fairness. Moreover, it gives rise to other system optimality notions that have not been fully exploited yet, such as the maximal lowest total subgroup utilities. For the axiomatically defined welfare measures, we provide a generic gradient-based method to find an optimal resource allocation and present a theoretical guarantee for its success. Lastly, we demonstrate the power of our approach through the power control application in wireless networks.    
\end{abstract}
\begin{IEEEkeywords}
Resource Allocation, System Welfare/Performance, Power Control, Supergradient Method, Interference Mitigation
\end{IEEEkeywords}
\section{Introduction}
The field of resource allocation (RA) concerns with the assignment of available scarce resources to various agents in a system characterized by competitive environment. The objective of the system manager in this regard is to lead the population toward an optimal state. RA has been an inherent part in widespread applications in economics, operation research, and engineering.

For instance, RA is a indispensable part of wireless systems, as they require a fundamental and crisp understanding of design principles and control mechanisms to manage network resources efficiently. RA policies in those systems aim to maximize the Quality of Service (QoS) at the user level, and to ensure efficient and optimized operation at the network level by maximizing the operator's revenue. RA management in wireless communications may include a broad spectrum of network functionalities, such as scheduling, transmission rate control, power control, bandwidth reservation, call admission control, transmitter assignment, and handover \cite{StanczakBoche2009,LeeChuah2014,Ahmed2005}.

One popular principle of system optimality/welfare in wireless systems is the so-called max-average utility principle. Following this principle, the RA strategy consists of solving an optimization problem having the average of the users'/links' utilities as the objective (see e.g. \cite{StanczakBoche2009,Mach2015,Kluegel2018,Matthiesen2020}). 
Another popular principle of system optimality/welfare in wireless systems is the so-called max-min fairness \cite{Bertsekas1992}. This principle strives for welfare on the basis of the worst-off links/users (see e.g., \cite{Zheng2016,Sadeghi2018}). RA policies following this principle usually lead to the equal allocation so that it is not possible to increase any utilities without deterioting others that are smaller or equal. Both, max-average utility and max-min fairness principle is also an object of study in other fields of sciences such as in economics, where they are known as the utilitarian and the Rawlsian welfare principle \cite{Rawls1971}.

Those optimality principles have advantages and disadvantages. The max-average utility principle ensures the achievement of the optimal (total) system performance of cost of reduction of some of the agents \cite{Massoulie2002}. On the contrary, the max-min fairness endeavors to optimize the "weakest" agents at the expense of a considerable drop in system efficiency expressed in terms of total service. A common approach is to balance between both optimality principles by replacing the agents' utilities in the formulation of the max-average optimization problem \cite{Kelly1998,Massoulie2002,Mo2000}, or by optimizing the average utility given a target fairness \cite{Zabini2017}.

  This paper's contribution is the introduction of a general concept for measuring a systems' performance/welfare, including the usually used measures such as the average agents' utilities and the minimum of agents' utilities. Our approach based on the duality theory allows one to relate a specific welfare measure to the average welfare via the averaging weights. This relation gives rise to the utilization (super)gradient of agents' utilities to find the optimal allocation. We also present the guarantee of the success, and illustrate the supergradient algorithm's applicability in achieving the corresponding welfarism in a cellular network. Thereby, we focus on the class of welfare measure called the low $K$-average welfare, where $K$ is a number smaller than the total number of the agents, measuring the smallest average of $K$ agents. Besides the optimality on its term, i.e., it maximizes the total utilities of $K$ worst-off agents, we show that the optimization respective to the low $K$-average welfare offers a natural tradeoff between average optimality and max-min fairness. Due to the space limitations, we skip the proofs of the theoretical results in the main part of this paper and provide them in the appendix.  
  
\paragraph*{Basic Notations and Notions}
Given a vector $\bsx\in\real^{N}$. Unless otherwise stated, $\bsx^{(i)}$ denotes the $i$'th element of $\bsx$. The notion of concavity and properness of functions used in this work coincides with the notion given in the literature (see e.g., \cite{Rockafellar1970}). Let be $\bsg\in\real^{D}$ and $x\in\real^{D}$. We say $\bsg$ is a supergradient of $\rmf$ at $\bsx\in\real^{D}$ if $\rmf(\bsy)\leq \rmf(\bsx)+\inn{\bsg}{\bsy-\bsx}$, $\forall \bsy\in\real^{N}$.
We denote the set of the supergradient of $\rmf$ at $\bsx$ by $\partial \rmf(\bsx)$. Sometimes, we write the later as $\partial_{\bsx} \rmf(\bsx)$ to emphasize the variable at which we consider the supergradient. For a given function $\rmf$, $\rmf_{e}$ denotes the function $\rmf(e^{(\cdot)})$. $\Delta([D])$ denotes the simplex on $\real^{D}$.

\section{System Model and Problem Formulation}
\label{Sec:ajjaasggsgsgsss}

\paragraph*{Allocation Vector and Agent's Utility}We consider the problem of allocating $D$ resources in a system consisting of $N$ agents. We denote the amount of allocation of resources by a vector $\boldnormtheta\in\real^{D}$, where $\boldnormtheta^{(k)}$ stands for the amount of allocation of the resource $k\in [N]$. The benefit of agent $i\in [N]$ from the allocation $\boldnormtheta\in\real^{D}$ is measured by a function $\bU^{(i)}:\real^{D}\rightarrow\overline{\real}$, called the \textit{utility function}, and is specifically given by $\bU^{(i)}(\boldnormtheta)$. Throughout this work, we usually assume $\bU^{(i)}$ to be concave and proper. Usually, the choice of resources allocation is restricted by some practical considerations, such as budget restrictions. This occurence can be modeled by a subset $\Theta\subset\real^{N}$ called the \textit{feasible set}, from which an allocation $\boldnormtheta$ can be chosen.

\paragraph*{Wireless Network Application}
For the purpose of illustrations, we consider in this paper the specific application of wireless network. We assume that the network contains $N$ users/agents 
 transmiting their independent data concurrently (due to transmission interferences). In this setting, the resource to be allocated is the (log of the) transmit power of each of the users. A resource allocation policy in this context is a vector $\bss\in\real^{N}$ specifying the \textit{log transmit power} of the users. To be specific, for any $k\in[N]$, $\bss^{(k)}\in\real$ specifies the log transmit power of agent $k$, i.e., $e^{\bss^{(k)}}$ is the transmit power of user $k$. Considering log transmission power is a usual practice in power control as it reveals the hidden convexity in the corresponding optimization problem \cite{TanPalomar2007}.Now, due to power constraints, we require that $e^{\bss}\in\P$, where:
\begin{equation}
\label{Eq:jjaaggsffsfsfsgsgsgsss}
\P:=\lrbrace{\bsp\in\real^{N}_{\geq 0}:~\mathbf{C}\bsp\leq \hat{\bsp}},
\end{equation}
for some a given $\bC\in\real_{\geq 0}^{M\times N}$ and $\hat{\bsp}\in\real^{M}$, with $M\in\nat$. Consequently, the set $\log\P$, where $\log$ is understood elementwise, corresponds to the feasible set $\Theta$ in the general resource allocation setting.

 For any $k\in [N]$, one can measure the utility of agent $k$ by the so called \textit{Quality-of-Service (QoS)} value. This value is specified by the so-called \textit{signal-to-interference-noise ratio (SINR)} of agent $k$ given by (see e.g., Chapter 4 in \cite{StanczakBoche2009}):
\begin{equation}
\label{Eq:jaajjssgsgsgsgsfsfssss}
\begin{split}
\SINR^{(k)}_{e}(\bss):=\SINR^{(k)}(e^{\bss})&:=\tfrac{V_{k,k}e^{\bss^{(k)}}}{\sum_{\substack{l\in [N]\\l\neq k}}V_{k,l}e^{\bss^{(l)}}+\sigma_{k}^{2}}
\end{split}
\end{equation} 
The constant $V_{k,k}>0$ in above definition represents the user $k$'s communication gain by power utilization. For any $k,l\in [N]$ with $k\neq l$, $V_{k,l}\geq 0$ specifies $k$'s performance reduction caused by link $l$ communication activity in form of interferences. The constant $\sigma_{k}^{2}$ in above definition denotes the power of the noise in agent $k$'s receiver. Finally by the definition of the SINR provided before, we can specify the QoS of agent $k$ by:
\begin{equation}
\label{Eq:aajajssgsgsgsfsfsgssffsfsss}
\bq^{(k)}_{e}(\bss):=\bq^{(k)}(e^{\bss})=\psi(\SINR^{(k)}(\bss)),
\end{equation}
where $\psi:\real\rightarrow\overline{\real}$ is desired to satisfy the following:
\begin{assum}
	\label{Ass:Log}
	$\psi:\real\rightarrow\overline{\real}$	is a function with $\dom(\psi)\subseteq \real_{>0}$ such that $\psi_{e}:=\psi(e^{(\cdot)})$ is concave.
\end{assum}
The condition above is of technical nature and allows one to utilize convex optimization method for solving QoS optimization as the QoS is concave in the logarithmic of power, i.e. $\bq^{(k)}_{e}$ is concave (see e.g. Theorem 7 in \cite{Naik2011}). An example of $\psi$ satisfying Assumption \ref{Ass:Log} is $\psi(x)=\log(x)$. For this kind of $\psi$ the QoS corresponds to the Shannon's theoretical data rate in the high SINR regime with normalized bandwidth $B=1$ as $\psi(x)$ approximates $\log(1+x)$ for large $x$. However, notice that $\psi(x)=\log(1+x)$ does not satisfy Assumption \ref{Ass:Log}. Another possible choice for $\psi$ satisfying Assumption \ref{Ass:Log} is $\psi(x)=-1/x^{\alpha}$, where $\alpha\geq 1$. This choice yields the QoS interpretable as the negative of bit error approximation for diversity order $\alpha$.

\paragraph*{Welfare Maximization} Let us now go back to the general setting of RA. The practice of RA is to find an allocation vector which yields the maximal possible benefit for the system. 
It is usual practice to find a compromise solution, i.e., an allocation such that the increase the utility of an agent leads to a simultaneous decrease of at least one other's utility.  
Such a compromise solution is specified by the whole system's performance as a function of the utility perceived by each partaker, according to the purpose of the system. One popular way to measure the utility of a system is by taking the average of the utilities of the individuals in the system. This definition of system utility corresponds to the so called average optimality principle or utilitarian welfare.
Respective to this principle, the problem to solve is:
\begin{equation}
\label{Eq:ajajjsgsgsfsffsfggsssssss}
\min_{\boldnormtheta\in\Theta} \frac{1}{N}\sum_{i=1}^{N}\bsw^{(i)}\bU^{(i)}(\boldnormtheta),
\end{equation}
where $\bsw^{(i)}$, $i\in [N]$, is a sequence of non-negative scalars summing up to one. Another popular way to measure the utility of a system is by taking the minimum of the utilities of the individuals in the system. This definition of system utility is also known as the max-min fairness or Rawlsian welfare. The corresponding problem to solve is:
\begin{equation}
\label{Eq:jaajjagsgsgsfsfssgsfssfsss}
\max_{\boldnormtheta\in\Theta} \min_{i\in [N]}\bU^{(i)}(\boldnormtheta)
\end{equation}
As discussed in the introduction, both the above-presented optimality principles have advantages and disadvantages. In the literature  \cite{Kelly1998,Massoulie2002,Mo2000}, it is usual practice to balance between those optimality principles. In contrast to the prior work, we aim in this work to achieve this balance by finding an approach unifying \eqref{Eq:ajajjsgsgsfsffsfggsssssss} and \eqref{Eq:jaajjagsgsgsfsfssgsfssfsss}, since the corresponding abstract concept might gives rise to another alternative optimality principle inbetween the aforementioned principles. Furthermore, our requirement for the desired approach is that it should allow one to use a generic method such as the first-order method to achieve the corresponding optimal allocation. 

\section{Welfare Measure: Axiomatic Approach, Robust Representation, and Supergradient}
\label{Sec:ajajgsgsgsffssgsfsgsfgssss}
As discussed in the previous section, optimal resource allocation strategy requires a measure for the system-wide performance. Usually used measure is the so-called \textit{average utility}, which takes the average of the individual welfares/utility:
\begin{equation*}
\overline{\Phi}^{\bsw}:\real^{N}\rightarrow\real,\quad \bsu\mapsto\sum_{i=1}^{N}\bsw^{(i)}\bsu_{(i)},
\end{equation*}
where $\bsw\in\Delta([N])$ is a given weight. Usually, one chooses equal weights. However, it is convenient to choose other weights in order to involve several technical aspects, such as the priority of the users. Average utility gives rise to the utilitarian welfare principle \eqref{Eq:ajajjsgsgsfsffsfggsssssss}. Another popular welfare measure is the so-called \textit{minimum utility}:
\begin{equation*}
\underline{\Phi}:\real^{N}\rightarrow\real,\quad \bsu\mapsto \min_{i\in [N]}\bsu^{(i)}.
\end{equation*}
In contrast to the average utility, this functional measures the system's welfare by considering the minimum individual welfare/utility. The minimum welfare gives rise to the Rawlsian welfare principle given in \eqref{Eq:jaajjagsgsgsfsfssgsfssfsss}.


We provide in the following the general concept of the welfare measure:
\begin{definition}[Welfare Measure (WM)]
Let be $\Phi:\real^{N}\rightarrow\real$. We say $\Phi$ is a welfare measure (WM) if $\Phi$ satisfies the following:
\begin{itemize}
	\item{(A1)} $\Phi$ is monotonic, i.e., $\Phi(\bsu)\leq\Phi(\bsu')$ if $\bsu\leq \bsu^{'}$.  
	\item{(A2)} $\Phi$ is concave.
	\item{(A3)} $\Phi$ is positively homogeneous, i.e., $\Phi(\lambda \bsu)=\lambda \Phi(\bsu)$, $\forall \lambda\geq 0,~\bsu\in\real^{N} $.  
	\item{(A4)} $\Phi$ is upper semi-continuous, i.e., for any $\bsu_{0}\in\real^{N}$,$\limsup_{\bsu\rightarrow \bsu_{0}}\Phi(\bsu)\leq \Phi(\bsu)$.
	\item{(A5)} $\Phi(1)=-\Phi(-1)$  
\end{itemize}
\end{definition}

The concavity condition (A3) and the upper semi-continuity condition (A4) allow us to use the concept of conjugate function in convex analysis \cite{Rockafellar1970,BoydBook2004} for analyzing welfare measures. The central result regarding this concept of our benefit is the so-called Fenchel-Moreau Theorem. Fenchel-Moreau Theorem allows one to write a fairly general concave function $\rmf$ as the maximum of a penalized linear function over an uncertainty set. The corresponding penalty function is given by the concave conjugate $\rmf^{*}(y):=\min_{x\in\real^{D}}\lrbrace{\inn{x}{y}-\rmf(x)}$. Furthermore, the homogeneity condition (A3) allows us to neglect the penalty function, and finally the monotonicity condition (A1) and the condition (A5) help us specify the corresponding uncertainty set. Our result is given specifically in the following theorem:
\begin{theorem}
	\label{Thm:ajajssgsgsffsfsssssss}
Let $\Phi:\real^{N}\rightarrow\real$ be a function. Then:
\begin{enumerate}
\item  $\Phi$ is a welfare measure if and only if it can be represented by:
\begin{equation}
\label{Eq:ajajajshsshsggsgsgsgsssss}
\Phi(\bsu)=\min_{\bsw\in\mathcal{U}}\overline{\Phi}^{\bsw}(\bsu),\quad\forall \bsu\in\real^{N}
\end{equation} 
where $\U$ is a non-empty closed and convex subset of the simplex $\Delta[N]$. 
\item $\U$ in \eqref{Eq:ajajajshsshsggsgsgsgsssss} is uniquely given by $\partial\Phi(0)$.  
\end{enumerate}   
\end{theorem}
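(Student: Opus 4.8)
The plan is to establish the equivalence in part~1 by two separate implications and then to read part~2 off the ``only if'' construction. The ``if'' direction is a routine verification. Suppose $\Phi(\bsu)=\min_{\bsw\in\U}\overline{\Phi}^{\bsw}(\bsu)$ with $\overline{\Phi}^{\bsw}(\bsu)=\inn{\bsw}{\bsu}$ and $\U\subseteq\Delta([N])$ nonempty, closed and convex. Then $\U$, a closed subset of the compact simplex, is itself compact, so the minimum is attained and $\Phi$ is finite; nonnegativity of the coordinates of every $\bsw\in\U$ yields monotonicity (A1); a pointwise minimum of linear functions is concave and upper semicontinuous, yielding (A2) and (A4); $\inn{\bsw}{\lambda\bsu}=\lambda\inn{\bsw}{\bsu}$ for $\lambda\ge 0$ yields positive homogeneity (A3); and $\inn{\bsw}{\mathbf 1}=1$ for every $\bsw\in\U$ gives $\Phi(\mathbf 1)=1=-\Phi(-\mathbf 1)$, which is (A5).

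For the ``only if'' direction I would construct $\U$ by convex duality, following the route the axioms suggest. By (A2) and (A4) the function $\Phi$ is proper, concave and upper semicontinuous, so the Fenchel--Moreau theorem gives $\Phi=\Phi^{**}$, namely $\Phi(\bsu)=\inf_{\bsw\in\real^{N}}\{\inn{\bsu}{\bsw}-\Phi^{*}(\bsw)\}$ with concave conjugate $\Phi^{*}(\bsw)=\inf_{\bsu\in\real^{N}}\{\inn{\bsu}{\bsw}-\Phi(\bsu)\}$. Positive homogeneity (A3) trivializes the penalty term: (A3) forces $\Phi(0)=0$, so $0$ belongs to the set $S_{\bsw}:=\{\inn{\bsu}{\bsw}-\Phi(\bsu):\bsu\in\real^{N}\}$, and substituting $t\bsu$ for $\bsu$ with $t>0$ shows $S_{\bsw}$ is invariant under multiplication by positive scalars; hence $\Phi^{*}(\bsw)=\inf S_{\bsw}\in\{0,-\infty\}$, and $\Phi^{*}(\bsw)=0$ exactly when $\inn{\bsu}{\bsw}\ge\Phi(\bsu)$ for all $\bsu$. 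With $\U:=\dom\Phi^{*}=\bigcap_{\bsu\in\real^{N}}\{\bsw\in\real^{N}:\inn{\bsu}{\bsw}\ge\Phi(\bsu)\}$ the identity $\Phi=\Phi^{**}$ becomes precisely \eqref{Eq:ajajajshsshsggsgsgsgsssss}. This $\U$ is closed and convex, being an intersection of closed half-spaces, and nonempty, since an empty $\U$ would make $\Phi^{*}$ identically $-\infty$, hence $\Phi^{**}$ identically $+\infty$, contradicting the finiteness of $\Phi$. Finally, (A1) and (A5) confine $\U$ to the simplex: evaluating $\inn{\bsu}{\bsw}\ge\Phi(\bsu)$ at $\bsu=e_{i}$ and using (A1) gives $\bsw^{(i)}\ge\Phi(e_{i})\ge\Phi(0)=0$, while evaluating at $\bsu=\mathbf 1$ and $\bsu=-\mathbf 1$ and using (A5) gives $\inn{\bsw}{\mathbf 1}=\Phi(\mathbf 1)=1$; in particular $\U$ is compact, so the infimum in \eqref{Eq:ajajajshsshsggsgsgsgsssss} is attained, as written.

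Part~2 is then immediate from this construction. Because $\Phi(0)=0$, the condition defining a supergradient of $\Phi$ at $0$, namely $\Phi(\bsu)\le\Phi(0)+\inn{\bsw}{\bsu-0}$ for all $\bsu$, reduces to $\inn{\bsu}{\bsw}\ge\Phi(\bsu)$ for all $\bsu$, which is exactly the membership condition for $\U=\dom\Phi^{*}$; hence $\partial\Phi(0)=\U$. For uniqueness, let $\V$ be any nonempty closed convex set with $\Phi=\min_{\bsw\in\V}\inn{\bsw}{\cdot}$; then $\max_{\bsw\in\V}\inn{\bsw}{\bsu}=-\min_{\bsw\in\V}\inn{\bsw}{-\bsu}=-\Phi(-\bsu)$ for every $\bsu$, so $\V$ has support function $\bsu\mapsto-\Phi(-\bsu)$. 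Since a nonempty closed convex set is uniquely determined by its support function, $\V=\partial\Phi(0)$.

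The only point I expect to demand care is this last step of the ``only if'' construction: simultaneously extracting nonnegativity of the weights and the correct total mass $\inn{\bsw}{\mathbf 1}=1$ — this is exactly where the order axiom (A1) and the symmetry/normalization axiom (A5) are used, whereas (A2)--(A4) by themselves only deliver the shape $\Phi=\min_{\bsw\in\U}\inn{\bsw}{\cdot}$ for some nonempty closed convex $\U\subseteq\real^{N}$. Everything else is a direct invocation of the Fenchel--Moreau theorem together with the bijection between nonempty closed convex sets and their support functions.
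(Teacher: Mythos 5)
Your proof is correct and follows essentially the same route as the paper: Fenchel--Moreau biconjugation, positive homogeneity forcing $\Phi^{*}$ to be the (concave) indicator of $\partial\Phi(0)$, axioms (A1) and (A5) pinning that set inside the simplex, and uniqueness via the fact that a nonempty closed convex set is determined by its support function (the paper phrases this last step through biconjugates of indicator functions, which is the same argument). The only shared leap with the paper is reading $\Phi(\mathbf 1)=1$ out of (A5), which as stated only gives $\inn{\bsw}{\mathbf 1}=\Phi(\mathbf 1)$; this is an implicit normalization in the axioms, not a defect of your argument relative to the paper's.
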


Theorem \ref{Thm:ajajssgsgsffsfsssssss} relates a general welfare measure to the usually used average utility, as it asserts that any welfare measure can be written as the maximum of the average welfare respective to the weights/priorities. 
Moreover, Theorem \ref{Thm:ajajssgsgsffsfsssssss} specifies the corresponding set of weights in the optimization program as the set of the welfare measure's supergradient at point $0$.

To use the supergradient method for welfare maximization, we need to compute a supergradient of the welfare measure. The following lemma based on the representation \eqref{Eq:ajajajshsshsggsgsgsgsssss} provides the corresponding tool:
\begin{lemma}
	\label{Lem:aajssgsgsgsfsfsss}
	Let be $\bsu\in\real^{D}$ and $\Phi$ be a welfare measure. $\bsw\in\real^{D}$ is a supergradient of $\Phi$ at $\bsu$ if and only if $\bsw\in\partial\Phi(0)$ and $\inn{\bsw}{\bsu}=\Phi(\bsu)$  	
\end{lemma}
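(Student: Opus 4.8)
The plan is to use the robust representation \eqref{Eq:ajajajshsshsggsgsgsgsssss} from Theorem \ref{Thm:ajajssgsgsffsfsssssss}, which expresses $\Phi(\bsu)=\min_{\bsw\in\U}\inn{\bsw}{\bsu}$ with $\U=\partial\Phi(0)$ a nonempty, closed, convex subset of $\Delta([N])$. The characterization of supergradients of $\Phi$ at a point $\bsu$ then reduces to a standard duality/selection statement: a supergradient is exactly a weight in $\U$ that \emph{attains} the minimum defining $\Phi(\bsu)$.

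\textbf{Sufficiency.} Suppose $\bsw\in\partial\Phi(0)$ and $\inn{\bsw}{\bsu}=\Phi(\bsu)$. For any $\bsy\in\real^{N}$ we have, by \eqref{Eq:ajajajshsshsggsgsgsgsssss}, that $\Phi(\bsy)=\min_{\bsv\in\U}\inn{\bsv}{\bsy}\leq\inn{\bsw}{\bsy}$ since $\bsw\in\U=\partial\Phi(0)$. Writing $\inn{\bsw}{\bsy}=\inn{\bsw}{\bsu}+\inn{\bsw}{\bsy-\bsu}=\Phi(\bsu)+\inn{\bsw}{\bsy-\bsu}$ gives exactly the supergradient inequality $\Phi(\bsy)\leq\Phi(\bsu)+\inn{\bsw}{\bsy-\bsu}$, so $\bsw\in\partial\Phi(\bsu)$.

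\textbf{Necessity.} Suppose $\bsw\in\partial\Phi(\bsu)$, i.e. $\Phi(\bsy)\leq\Phi(\bsu)+\inn{\bsw}{\bsy-\bsu}$ for all $\bsy$. First, taking $\bsy=0$ and using positive homogeneity (A3), which gives $\Phi(0)=0$, yields $0\leq\Phi(\bsu)-\inn{\bsw}{\bsu}$. Next, taking $\bsy=2\bsu$ and using $\Phi(2\bsu)=2\Phi(\bsu)$ from (A3) yields $2\Phi(\bsu)\leq\Phi(\bsu)+\inn{\bsw}{\bsu}$, i.e. $\Phi(\bsu)\leq\inn{\bsw}{\bsu}$. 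Combining, $\inn{\bsw}{\bsu}=\Phi(\bsu)$. It remains to show $\bsw\in\partial\Phi(0)$: but for any $\bsy$, the supergradient inequality together with $\inn{\bsw}{\bsu}=\Phi(\bsu)$ gives $\Phi(\bsy)\leq\Phi(\bsu)+\inn{\bsw}{\bsy}-\inn{\bsw}{\bsu}=\inn{\bsw}{\bsy}=\Phi(0)+\inn{\bsw}{\bsy-0}$, which is precisely the statement that $\bsw\in\partial\Phi(0)$. This closes the equivalence.

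The main subtlety — and the only place care is needed — is invoking $\Phi(0)=0$: this follows from positive homogeneity (A3) applied with $\lambda=0$, and one should note this holds because $\Phi$ is finite-valued on all of $\real^{N}$, so no $\infty-\infty$ issue arises. Everything else is a direct manipulation of the defining inequality together with the homogeneity of $\Phi$ at the scaling factors $0$ and $2$; no appeal to the explicit form of $\U$ beyond $\U=\partial\Phi(0)$ (part 2 of Theorem \ref{Thm:ajajssgsgsffsfsssssss}) is required, and in fact the necessity argument shows directly that membership in $\partial\Phi(\bsu)$ forces membership in $\partial\Phi(0)$ without using the representation at all.
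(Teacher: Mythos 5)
Your proof is correct, but it takes a more elementary route than the paper. The paper proves this lemma in one stroke from conjugate duality: by the Fenchel equality (part 2 of Proposition \ref{Prop:Folklore}), $\bsw\in\partial\Phi(\bsu)$ if and only if $\inn{\bsu}{\bsw}=\Phi(\bsu)+\Phi^{*}(\bsw)$, and since the proof of Theorem \ref{Thm:ajajssgsgsffsfsssssss} already established $\Phi^{*}=\delta_{\partial\Phi(0)}$, this equality forces $\bsw\in\partial\Phi(0)$ (else the right-hand side is $-\infty$) and then reduces to $\inn{\bsw}{\bsu}=\Phi(\bsu)$. You instead work directly with the defining supergradient inequality: for necessity you evaluate it at $\bsy=0$ and $\bsy=2\bsu$ and use positive homogeneity (A3) to pin down $\inn{\bsw}{\bsu}=\Phi(\bsu)$, and then observe that the inequality rewrites as the supergradient inequality at $0$; for sufficiency you use the robust representation (though, as you could note, $\bsw\in\partial\Phi(0)$ together with $\Phi(0)=0$ already gives $\Phi(\bsy)\leq\inn{\bsw}{\bsy}$ without invoking Theorem \ref{Thm:ajajssgsgsffsfsssssss} at all). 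What the paper's route buys is brevity and reuse of the conjugate computation $\Phi^{*}=\delta_{\partial\Phi(0)}$; what yours buys is self-containedness and transparency --- in particular, your necessity argument exhibits the general fact that for a finite positively homogeneous concave function every supergradient at any point is a supergradient at the origin that attains the value, without any appeal to conjugates. Your explicit remark that $\Phi(0)=0$ is unproblematic because $\Phi$ is real-valued is a point the paper leaves implicit.
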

Above lemma specifies the problem of finding a supergradient of a welfare measure to the problem of finding the solution $\bsw\in\partial\Phi(0)$ of the equation $\Phi(\bsu)=\inn{\bsw}{\bsu}$ for a fixed $\bsu$. For later purpose, we denote the set of such solutions by $\mathcal{W}(\Phi,\bsu)$, i.e.:
\begin{equation}
\label{Eq:ajajajsgsgsgsgssfsfsggsfsfsss}
\mathcal{W}(\Phi,\bsu):=\lrbrace{\partial\Phi(0):~\inn{\bsw}{\bsu}=\Phi(\bsu)}.
\end{equation}
\section{Supergradient Method for Welfare Optimization}
\label{Sec:SupGrad}
In this section, we aim to solve the optimization problem:
\begin{equation}
\label{Eq:ajajashssgggshsgsgssss}
\max_{\boldnormtheta\in\Theta}\Phi(\bU(\boldnormtheta)),
\end{equation}
where $\Phi$ is a welfare measure, $\bU$ is a vector-valued function specifying agents' utilities, and $\Theta$ is a problem-specific constraint set. One canonical way to solve the problem having the form \eqref{Eq:ajajashssgggshsgsgssss} is the so-called projected supergradient method whose iterate is given by:
\begin{equation}
\label{Eq:jajshgsgsggsfsfsffsgss}
\boldnormtheta_{t+1}=\Pi_{\Theta}(\boldnormtheta_{t}+\gamma_{t}g_{t}),
\end{equation} 
where $\Pi_{\Theta}$ denotes the usual Euclidean projection, $\gamma_{t}>0$ is a given step-size, and $g_{t}$ is a supergradient of $\Phi(\bU)$ at the resource allocation $\boldnormtheta_{t}$. 

To implement the supergradient method \eqref{Eq:jajshgsgsggsfsfsffsgss}, we need to ensure that $\Phi(\bU)$ is convex and to compute at each step $t$ a supergradient of $\Phi(\bU)$ at the iteration point $\boldnormtheta_{t}$. For this purposes, we can utilize the following consequence of Lemma
 \ref{Lem:aajssgsgsgsfsfsss}:
\begin{theorem}
	\label{Thm:auauasgsgsfsfsssss}
	Let $\Phi$ be a welfare measure, and for all $i\in [N]$, $\bU^{(i)}:\real^{D}\rightarrow\overline{\real}$ be a proper concave function. Suppose that $\boldnormtheta\in\bigcap_{i=1}^{N}\relint(\dom(U_{i}))\neq \emptyset$.  Then $\Phi(\bU)$ is a proper concave function. Furthermore, let be $\boldnormtheta\in\bigcap_{i=1}^{N}\relint(\dom(U_{i}))$, $\tilde{\bsg}^{(i)}\in\partial \bU^{(i)}(\boldnormtheta)$, $i\in [N]$, and $w\in\partial\Phi(0)$ satisfying:
	\begin{equation}
	\label{Eq:aahahsgsgsfsfssgfsssgsgs}
	\Phi(\bU(\boldnormtheta))=\inn{\bsw}{\bU(\boldnormtheta)}.
	\end{equation}
	Then:
	\begin{equation}
	\label{Eq:ajajajssggssgffsfssss}
	\sum_{i=1}^{N}\bsw^{(i)}\tilde{\bsg}^{(i)}\in \partial_{\boldnormtheta}\Phi(\bU(\boldnormtheta)).
	\end{equation}
\end{theorem}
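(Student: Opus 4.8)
The plan is to prove the two assertions of Theorem~\ref{Thm:auauasgsgsfsfsssss} in sequence: first that $\Phi(\bU)$ is proper and concave, and then that the weighted sum of supergradients in \eqref{Eq:ajajajssggssgffsfssss} lies in $\partial_{\boldnormtheta}\Phi(\bU(\boldnormtheta))$. For the first part, I would invoke the robust representation \eqref{Eq:ajajajshsshsggsgsgsgsssss} from Theorem~\ref{Thm:ajajssgsgsffsfsssssss}, which writes $\Phi(\bU(\boldnormtheta))=\min_{\bsw\in\U}\sum_{i=1}^{N}\bsw^{(i)}\bU^{(i)}(\boldnormtheta)$. Each map $\boldnormtheta\mapsto\sum_{i}\bsw^{(i)}\bU^{(i)}(\boldnormtheta)$ is a non-negatively weighted sum of proper concave functions, hence concave; it is proper because the hypothesis $\boldnormtheta\in\bigcap_{i=1}^{N}\relint(\dom(\bU^{(i)}))\neq\emptyset$ guarantees a common point where all $\bU^{(i)}$ are finite, so the sum is not identically $-\infty$ and never takes $+\infty$ (utilities being $\overline{\real}$-valued concave). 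A pointwise infimum of concave functions is concave, so $\Phi(\bU)$ is concave, and finiteness at the common relative-interior point shows it is proper.

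For the second part, the key device is Lemma~\ref{Lem:aajssgsgsgsfsfsss} together with the subdifferential chain rule for concave functions. Fix $\boldnormtheta$ in the common relative interior, pick $\tilde{\bsg}^{(i)}\in\partial\bU^{(i)}(\boldnormtheta)$ and $\bsw\in\partial\Phi(0)$ with $\Phi(\bU(\boldnormtheta))=\inn{\bsw}{\bU(\boldnormtheta)}$. By the supergradient inequalities for each $\bU^{(i)}$, for any $\boldnormtheta'\in\real^{D}$ we have $\bU^{(i)}(\boldnormtheta')\le\bU^{(i)}(\boldnormtheta)+\inn{\tilde{\bsg}^{(i)}}{\boldnormtheta'-\boldnormtheta}$. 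Multiplying by $\bsw^{(i)}\ge 0$ (legitimate since $\bsw\in\partial\Phi(0)\subseteq\Delta[N]$ by Theorem~\ref{Thm:ajajssgsgsffsfsssssss}) and summing over $i$ gives $\inn{\bsw}{\bU(\boldnormtheta')}\le\inn{\bsw}{\bU(\boldnormtheta)}+\inn{\sum_{i}\bsw^{(i)}\tilde{\bsg}^{(i)}}{\boldnormtheta'-\boldnormtheta}$. Now use Lemma~\ref{Lem:aajssgsgsgsfsfsss} in both directions: since $\bsw\in\partial\Phi(0)=\U$, the representation \eqref{Eq:ajajajshsshsggsgsgsgsssss} yields $\Phi(\bU(\boldnormtheta'))\le\inn{\bsw}{\bU(\boldnormtheta')}$ for the left side, while the selection hypothesis gives $\inn{\bsw}{\bU(\boldnormtheta)}=\Phi(\bU(\boldnormtheta))$ for the right side. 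Chaining these inequalities produces exactly $\Phi(\bU(\boldnormtheta'))\le\Phi(\bU(\boldnormtheta))+\inn{\sum_{i}\bsw^{(i)}\tilde{\bsg}^{(i)}}{\boldnormtheta'-\boldnormtheta}$, which is the defining supergradient inequality for $\sum_{i}\bsw^{(i)}\tilde{\bsg}^{(i)}\in\partial_{\boldnormtheta}\Phi(\bU(\boldnormtheta))$.

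The main obstacle I anticipate is not the inequality chain, which is essentially a one-line telescoping argument once the two applications of Lemma~\ref{Lem:aajssgsgsgsfsfsss} are lined up correctly, but rather the bookkeeping around properness and the extended-real arithmetic: one must be careful that $\bU^{(i)}(\boldnormtheta')$ may equal $-\infty$ for some $\boldnormtheta'$, in which case the supergradient inequality still holds vacuously, and that the existence of $\bsw\in\partial\Phi(0)$ with the equality \eqref{Eq:aahahsgsgsfsfssgfsssgsgs} is precisely what Lemma~\ref{Lem:aajssgsgsgsfsfsss} guarantees (the set $\mathcal{W}(\Phi,\bU(\boldnormtheta))$ being nonempty because $\partial\Phi(\bsu)\neq\emptyset$ for a finite-valued concave $\Phi$ on $\real^{N}$). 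A secondary point worth checking is that the existence of $\tilde{\bsg}^{(i)}\in\partial\bU^{(i)}(\boldnormtheta)$ is guaranteed by $\boldnormtheta\in\relint(\dom(\bU^{(i)}))$, a standard fact from convex analysis \cite{Rockafellar1970}. With these points addressed, the proof is complete.
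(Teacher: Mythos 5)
Your proposal is correct and follows essentially the same route as the paper's proof: concavity via the pointwise-minimum representation \eqref{Eq:ajajajshsshsggsgsgsgsssss}, nonnegativity of $\bsw$ from $\partial\Phi(0)\subseteq\Delta([N])$ (Theorem \ref{Thm:ajajssgsgsffsfsssssss}), the weighted sum of the individual supergradient inequalities, and the chaining step. The only cosmetic difference is that you obtain $\Phi(\bU(\boldnormtheta'))\leq\inn{\bsw}{\bU(\boldnormtheta')}$ directly from the robust representation, while the paper phrases the same bound as the supergradient inequality for $\bsw\in\partial_{\bU(\boldnormtheta)}\Phi(\bU(\boldnormtheta))$ obtained via Lemma \ref{Lem:aajssgsgsgsfsfsss}; these are equivalent given the hypothesis \eqref{Eq:aahahsgsgsfsfssgfsssgsgs}.
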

 To compute a supergradient of $\Phi(\bU)$ at a point $\boldnormtheta$, we first query the (super)gradients of the utilities at the resource allocation of our interest. Finally, we obtain the supergradient of $\Phi(\bU)$ at point $\boldnormtheta$ by averaging the latter objects with weights contained in the set  $\mathcal{W}(\Phi,\bU(\boldnormtheta))$ defined in \eqref{Eq:ajajajsgsgsgsgssfsfsggsfsfsss}. Technically, one needs, in order to choose a weight in $\mathcal{W}(\Phi,\bU(\boldnormtheta))$, to know $\partial\Phi(0)$ and solve a corresponding linear equation. However, this is an easy task at least for the specific class of welfare measure discussed later in this paper (Section \ref{Sec:Case}). 
Finally, we provide the specific supergradient algorithm for solving the optimization problem \eqref{Eq:ajajashssgggshsgsgssss} in Algorithm \ref{Alg:aoaishhjddhhddddeee2}. We refer the corresponding algorithm throughout this work as pupergradient method for welfare maximization (SMWM). 

%

%



The first step to guarantee the success of SMWM to ensure that the sequence of the supergradients produced by SMWM. This is necessary in order to eliminate the possibility that the corresponding dupergradient method alternates around the solution of the corresponding optimization problem. Provided that the utilities of the agents have uniformly bounded supergradients, this crucial condition is fulfilled: 
  
\begin{lemma}[Boundedness of Supergradients for SMWM]
	\label{Lem:bounded}
Suppose that for any $i\in [N]$, the superdifferential set of $\bU^{(i)}$ is uniformly bounded, i.e.:
\begin{equation}
\label{Eq:ajajajssgsgsfsfsffssssss}
\bsM^{(i)}:=\sup_{\substack{\boldnormtheta\in\Theta\\\bsg\in\partial \bU^{(i)}(\boldnormtheta)}}\norm{\bsg}<\infty.
\end{equation}	
Then for $\bsg_{t}$, $t\in [T-1]$ given in SMWM \eqref{Eq:Supergrad}, it holds:
\begin{equation*}
\sup_{t\in [T-1]_{0}}\norm{\bsg_{t}}_{2}\leq \max_{\bsw\in\partial\Phi(0)}\inn{\bsw}{\bsM}<\infty
\end{equation*}
\end{lemma}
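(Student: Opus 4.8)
The plan is to unwind the structure of the iterate supergradient $\bsg_t$ supplied by Theorem~\ref{Thm:auauasgsgsfsfsssss} and then bound its norm by a plain convex-combination estimate, exploiting that the weight is drawn from $\partial\Phi(0)\subseteq\Delta([N])$.

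First I would recall how $\bsg_t$ is produced inside SMWM: at the iterate $\boldnormtheta_t$ one selects supergradients $\tilde{\bsg}_t^{(i)}\in\partial\bU^{(i)}(\boldnormtheta_t)$ of the individual utilities and a weight $\bsw_t\in\mathcal{W}(\Phi,\bU(\boldnormtheta_t))\subseteq\partial\Phi(0)$ with $\inn{\bsw_t}{\bU(\boldnormtheta_t)}=\Phi(\bU(\boldnormtheta_t))$, and sets $\bsg_t=\sum_{i=1}^{N}\bsw_t^{(i)}\tilde{\bsg}_t^{(i)}$; by Theorem~\ref{Thm:auauasgsgsfsfsssss} this is indeed a supergradient of $\Phi(\bU)$ at $\boldnormtheta_t$. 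The one structural fact needed from the earlier results is Theorem~\ref{Thm:ajajssgsgsffsfsssssss}(2): $\partial\Phi(0)$ is a nonempty, closed, convex subset of the simplex $\Delta([N])$; in particular $\bsw_t^{(i)}\geq 0$ for every $i$ and $\sum_{i=1}^{N}\bsw_t^{(i)}=1$.

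The core estimate is then a one-line triangle inequality. Every iterate $\boldnormtheta_t$ lies in $\Theta$ (the update \eqref{Eq:jajshgsgsggsfsfsffsgss} ends with the Euclidean projection $\Pi_\Theta$, and $\boldnormtheta_0$ is feasible by initialization), so the uniform bound \eqref{Eq:ajajajssgsgsfsfsffssssss} yields $\norm{\tilde{\bsg}_t^{(i)}}_2\leq\bsM^{(i)}$ for all $i$ and $t$, whence
\[
\norm{\bsg_t}_2 \;\le\; \sum_{i=1}^{N}\bsw_t^{(i)}\,\norm{\tilde{\bsg}_t^{(i)}}_2 \;\le\; \sum_{i=1}^{N}\bsw_t^{(i)}\,\bsM^{(i)} \;=\; \inn{\bsw_t}{\bsM},
\]
where nonnegativity of the $\bsw_t^{(i)}$ is exactly what makes both inequalities valid. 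Since $\bsw_t\in\partial\Phi(0)$, the right-hand side is bounded by $\max_{\bsw\in\partial\Phi(0)}\inn{\bsw}{\bsM}$, and this bound is uniform in $t$, hence it also bounds $\sup_{t\in[T-1]_{0}}\norm{\bsg_t}_2$.

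Finally I would check finiteness of the right-hand side: $\partial\Phi(0)$ is a closed subset of the compact simplex $\Delta([N])$, hence compact, and $\bsw\mapsto\inn{\bsw}{\bsM}$ is continuous with $\bsM$ finite by hypothesis, so the maximum is attained and finite by Weierstrass. I do not expect a genuine obstacle here; the only points deserving care are (i) invoking Theorem~\ref{Thm:ajajssgsgsffsfsssssss} so that the weights are known to lie in $\Delta([N])$, which turns the displayed chain into a true convex-combination bound rather than a mere triangle inequality, and (ii) the observation that feasibility of every iterate is precisely what licenses the use of \eqref{Eq:ajajajssgsgsfsfsffssssss}.
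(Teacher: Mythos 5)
Your proof is correct and follows essentially the same route as the paper's: the triangle inequality turns $\norm{\bsg_t}_2$ into the convex combination $\sum_i \bsw_t^{(i)}\norm{\tilde{\bsg}_t^{(i)}}_2$, which is bounded by $\inn{\bsw_t}{\bsM}$ and then by the maximum over $\partial\Phi(0)$. Your two added observations --- that every iterate stays in $\Theta$ because of the projection, and that the maximum is attained and finite since $\partial\Phi(0)$ is a compact subset of $\Delta([N])$ --- are left implicit in the paper but are exactly the right points to make explicit.
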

	\begin{algorithm}[htbp]
	\caption{Supergradient Method for Welfare Maximization (SMWM)}
	\label{Alg:aoaishhjddhhddddeee2}
	\begin{algorithmic}[1]
		\Require 
		Initial iterate $\boldnormtheta_{0}\in\real^{D}$, time horizon $T\in\nat$, step-size sequence $(\gamma_{t})_{t\in [T-1]_{0}}$.
		\For{$t=0,\ldots,T-1$}
		\State Observe $\Phi(\bU(\boldnormtheta_{t}))$
		\State Choose a weight vector $\bsw_{t}\in\partial\Phi(0)$ satisfying:
		\begin{equation*}
		\Phi\left( \bU^{(1)}(\boldnormtheta_{t}),\ldots,\bU^{(N)}(\boldnormtheta_{t})\right) =\sum_{i=1}^{N}\bsw^{(i)}_{t}\bU^{(i)}(\boldnormtheta)
		\end{equation*}
		\For{all agents $i\in [N]$}
		\State Query $\tilde{\bsg}_{t}^{(i)}\in\partial \bU^{(i)}(\boldnormtheta_{t})$ from agent $i$
		\EndFor
		
		\State Accumulate agents' supergradients:
		\begin{equation}
		\label{Eq:Supergrad}
		\bsg_{t}=\sum_{i=1}^{N}\bsw^{(i)}_{t}\tilde{\bsg}^{(i)}_{t}
		\end{equation}
		\State Updates the resource allocation:
		\begin{equation}
		\label{Eq:jajjghssggsgsgsfsfsgsgsfsfss}
		\boldnormtheta_{t+1}\gets\Pi_{\Theta}\left( \boldnormtheta_{t}+\gamma_{t}\bsg_{t}\right) 
		\end{equation}
		\EndFor
		\State Take the ergodic average of $(\boldnormtheta_{t})_{t\in [T-1]_{0}}$ or the best iterate:
		\begin{equation}
		\label{Eq:ajajajssshsgsgshsgsgshgs}
		\overline{\boldnormtheta}^{\gamma}_{T}=\tfrac{\sum_{t=0}^{T-1}\gamma_{t}\boldnormtheta_{t}}{\sum_{t=0}^{T-1}\gamma_{t}}\quad\text{or}\quad\boldnormtheta_{\max,T}\in\argmax_{t\in [T-1]_{0}}\Phi(\bU(\boldnormtheta_{t}))
		\end{equation}
	\end{algorithmic}
\end{algorithm}
\setlength{\textfloatsep}{0pt}
Now, we can provide a guarantee for the success of SMWM in finding the solution of \eqref{Eq:ajajashssgggshsgsgssss}:
\begin{theorem}[Convergence of SMWM]
	\label{Thm:akakasshshssgsggs}
	
Suppose that \eqref{Eq:ajajajssgsgsfsfsffssssss} holds,
and that $D_{\Theta}^{2}:=\max_{\boldnormtheta,\tilde{\boldnormtheta}\in\Theta}\tfrac{\norm{\boldnormtheta-\tilde{\boldnormtheta}}_{2}^{2}}{2}<\infty$.
Then for the output $\tilde{\boldnormtheta}$ (see \eqref{Eq:ajajajssshsgsgshsgsgshgs}) and $\boldnormtheta_{*}$ a solution of \eqref{Eq:ajajashssgggshsgsgssss}, it holds:  
\begin{equation*}
\Phi(\bU(\tilde{\boldnormtheta})\geq \Phi(U(\boldnormtheta_{*}))- \tfrac{D_{\X}^{2}}{\sum_{t=0}^{T-1}\gamma_{t}}- \tilde{M}^{2}\tfrac{\sum_{t=0}^{T-1}\gamma_{t}^{2}}{\sum_{t=0}^{T-1}\gamma_{t}},
\end{equation*}
where $\tilde{M}$ is given by $\tilde{M}:=\max_{\bsw\in\partial\Phi(0)}\inn{\bsw}{\bsM}$
\end{theorem}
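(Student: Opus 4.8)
The plan is to reduce the statement to the textbook convergence analysis of projected supergradient ascent applied to the composite objective $\rmf:=\Phi(\bU(\cdot))$, using the earlier results to supply the two ingredients that analysis needs: that each $\bsg_t$ produced by SMWM is a genuine supergradient of $\rmf$ at the current iterate $\boldnormtheta_t$, and that these supergradients are uniformly bounded in norm. (The constant written $D_\X$ in the statement is the $D_\Theta$ of the hypothesis, and I use the latter notation throughout.)

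First I would establish those two ingredients. By Theorem \ref{Thm:auauasgsgsfsfsssss}, under the standing concavity and properness assumptions on the $\bU^{(i)}$ (with the iterates lying in $\bigcap_{i=1}^{N}\relint(\dom(\bU^{(i)}))$), the composite $\rmf=\Phi(\bU(\cdot))$ is proper and concave, and, because in SMWM the weight $\bsw_t\in\partial\Phi(0)$ is chosen so that $\Phi(\bU(\boldnormtheta_t))=\inn{\bsw_t}{\bU(\boldnormtheta_t)}$, the accumulated direction $\bsg_t=\sum_{i=1}^{N}\bsw_t^{(i)}\tilde{\bsg}_t^{(i)}$ of \eqref{Eq:Supergrad} satisfies $\bsg_t\in\partial_{\boldnormtheta}\Phi(\bU(\boldnormtheta_t))$. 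Next, assumption \eqref{Eq:ajajajssgsgsfsfsffssssss} together with Lemma \ref{Lem:bounded} gives $\norm{\bsg_t}_2\le\tilde{M}=\max_{\bsw\in\partial\Phi(0)}\inn{\bsw}{\bsM}<\infty$ uniformly in $t$. So we are in the situation of supergradient ascent with uniformly bounded steps maximizing a concave function over $\Theta$.

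Then I would run the standard one-step estimate. Fix a maximizer $\boldnormtheta_*\in\Theta$ of \eqref{Eq:ajajashssgggshsgsgssss}; since $\Pi_\Theta$ is nonexpansive and fixes $\boldnormtheta_*$, the update \eqref{Eq:jajjghssggsgsgsfsfsgsgsfsfss} yields
\[
\norm{\boldnormtheta_{t+1}-\boldnormtheta_*}_2^2\le\norm{\boldnormtheta_t-\boldnormtheta_*}_2^2+2\gamma_t\inn{\bsg_t}{\boldnormtheta_t-\boldnormtheta_*}+\gamma_t^2\norm{\bsg_t}_2^2.
\]
The supergradient inequality $\rmf(\boldnormtheta_*)\le\rmf(\boldnormtheta_t)+\inn{\bsg_t}{\boldnormtheta_*-\boldnormtheta_t}$ bounds $\inn{\bsg_t}{\boldnormtheta_t-\boldnormtheta_*}\le\rmf(\boldnormtheta_t)-\rmf(\boldnormtheta_*)$, and $\norm{\bsg_t}_2^2\le\tilde{M}^2$, so $2\gamma_t\big(\rmf(\boldnormtheta_*)-\rmf(\boldnormtheta_t)\big)\le\norm{\boldnormtheta_t-\boldnormtheta_*}_2^2-\norm{\boldnormtheta_{t+1}-\boldnormtheta_*}_2^2+\gamma_t^2\tilde{M}^2$. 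Summing over $t=0,\ldots,T-1$ telescopes the first two terms; discarding $-\norm{\boldnormtheta_T-\boldnormtheta_*}_2^2\le0$, using $\norm{\boldnormtheta_0-\boldnormtheta_*}_2^2\le2D_\Theta^2$, and dividing by $2\sum_{t}\gamma_t$ gives
\[
\rmf(\boldnormtheta_*)-\frac{\sum_{t=0}^{T-1}\gamma_t\,\rmf(\boldnormtheta_t)}{\sum_{t=0}^{T-1}\gamma_t}\le\frac{D_\Theta^2}{\sum_{t=0}^{T-1}\gamma_t}+\frac{\tilde{M}^2}{2}\,\frac{\sum_{t=0}^{T-1}\gamma_t^2}{\sum_{t=0}^{T-1}\gamma_t},
\]
which a fortiori gives the bound claimed in the theorem (with $\tilde{M}^2$ in place of $\tilde{M}^2/2$). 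To finish, I would pass to the two reported outputs: for the best iterate $\boldnormtheta_{\max,T}$ one has $\Phi(\bU(\boldnormtheta_{\max,T}))=\max_{t}\rmf(\boldnormtheta_t)\ge\tfrac{\sum_t\gamma_t\rmf(\boldnormtheta_t)}{\sum_t\gamma_t}$, since a maximum dominates any weighted average; for the ergodic iterate $\overline{\boldnormtheta}^{\gamma}_{T}$, convexity of $\Theta$ gives $\overline{\boldnormtheta}^{\gamma}_{T}\in\Theta$ and concavity of $\rmf$ (Jensen) gives $\Phi(\bU(\overline{\boldnormtheta}^{\gamma}_{T}))\ge\tfrac{\sum_t\gamma_t\rmf(\boldnormtheta_t)}{\sum_t\gamma_t}$. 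Either way, since $\rmf(\boldnormtheta_*)=\Phi(\bU(\boldnormtheta_*))$ is the optimal value, the assertion follows.

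I do not anticipate a substantive obstacle: once the composite is known to be concave with the SMWM direction $\bsg_t$ a valid supergradient (the content of Theorem \ref{Thm:auauasgsgsfsfsssss}, which rests in turn on the robust representation of Theorem \ref{Thm:ajajssgsgsffsfsssssss} and a superdifferential chain rule under the relative-interior constraint qualification) and the uniform bound of Lemma \ref{Lem:bounded} is in hand, the remainder is the routine telescoping argument. The only bookkeeping worth flagging is that one needs $\Theta$ closed and convex so that $\Pi_\Theta$ is well-defined and nonexpansive, and $\boldnormtheta_0\in\Theta$ (or else the telescoping sum is started at $t=1$) so that $D_\Theta$ controls $\norm{\boldnormtheta_0-\boldnormtheta_*}_2$.
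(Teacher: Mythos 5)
Your proof is correct and follows exactly the route the paper itself takes: it invokes Lemma \ref{Lem:bounded} for the uniform bound $\norm{\bsg_t}_2\le\tilde{M}$ and Theorem \ref{Thm:auauasgsgsfsfsssss} for the validity of the supergradient, then runs the standard nonexpansiveness-plus-telescoping analysis of projected supergradient ascent (the paper simply cites this as Theorem 3.1 of \cite{lan2020first} rather than writing it out). Your derivation in fact yields the slightly sharper constant $\tilde{M}^2/2$, which a fortiori implies the stated bound.
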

The proof of above result follows from Lemma \ref{Lem:bounded} and standard proof of the convergence of supergradient method (see e.g. Theorem 3.1 in \cite{lan2020first}).

Theorem \ref{Thm:akakasshshssgsggs} gives a guideline for an appropriate choice of step size sequences. For instance, setting the step size $\gamma_{t}=\sqrt{\tfrac{2D_{\X}^{2}}{TM^{2}}}$, $t\in [T-1]_{0}$, it follows that for small $\epsilon>0$, we need $T\geq\sqrt{MD_{\X}/(\sqrt{2}\epsilon)}$ steps in order that the output $\tilde{\boldnormtheta}_{T}$ of SWMW satisfies $\Phi(\bU(\tilde{\boldnormtheta}_{T}))\geq\max_{\boldnormtheta\in\Theta}\Phi(\bU(\boldnormtheta))-\epsilon$.
We may alternatively choose the variable step-size such as $\gamma_{t}=\sqrt{\tfrac{2D_{\X}^{2}}{tM^{2}}}$, $t\in [T-1]_{0}$, in order to obtain a comparable guarantee. 

\section{Case Study: Average Low-$K$ Utility Maximization}
\label{Sec:Case}
For better understanding of the specific application of SMWM, we consider in this section a specific class of welfare measure given in the following:
\begin{definition}[Low $K$-average Welfare]
Let be $K\in\nat$. We define the Lowest $K$-average utility as the mapping:
\begin{equation}
\label{Eq:ajjajsgsgsgffsfsgsfsfsss}
\underline{\Phi}_{K}:\real^{N}\rightarrow\real,\quad \bsu\mapsto  \tfrac{1}{K}\sum_{i=1}^{K}\bsu^{(\pi_{\bsu}(i))},
\end{equation} 	
where for any $u\in\real^{N}$, $\pi_{\bsu}:[N]\rightarrow [N]$ is a function satisfying $\bsu^{(\pi_{\bsu}(1))}\leq \bsu^{(\pi_{\bsu}(2))}\leq \cdots\leq \bsu^{(\pi_{\bsu}(N))}$
\end{definition}
In other words, the lowest $K$-average utility measures the average of the utilities of $K$ agents having the lowest utilities among all.

 Now, we show that this class of functionals is a subclass of welfare measures. The following proposition which is a straightforward application of the Karush-Kuhn-Tucker condition is helpful for this purpose: 
\begin{proposition}[Robust Representation of Low $K$-AW]
	\label{Prop:LowK}
Let $\underline{\Phi}_{K}$ be the Low-K average welfare.
\begin{enumerate}
\item It holds:
\begin{equation}
\label{Eq:ddkkrrjjrjrrrrrrr}
\underline{\Phi}_{K}(\bsu)=\min_{\bsw\in\A_{K}}\inn{\bsw}{\bsu},\quad \bsu\in\real^{D},
\end{equation}
where:
\begin{equation}
\label{Eq:aajjsssgsgggsssss}
\A_{K}:=\lrbrace{w\in\Delta([N]):~w_{i}\leq 1/K,~\forall i\in[D]},
\end{equation}
\item The solution of the optimization problem in \eqref{Eq:ddkkrrjjrjrrrrrrr} is given by:
\begin{equation}
\label{Eq:aaahhsgsgssfsfsfssgsgsgsss}
w_{i}^{*}=\begin{cases}
\tfrac{1}{K},\quad &\text{if } i\in\pi_{u}([K])\\
0,&\text{otherwise}
\end{cases},\quad \forall i\in [D].
\end{equation} 
\item $\partial\underline{\Phi}_{K}(0)=\A_{K}$, where $\A_{K}$ is given by \eqref{Eq:aajjsssgsgggsssss}
\end{enumerate}
\end{proposition}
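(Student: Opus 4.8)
The plan is to prove part (1) directly via a linear-programming / KKT argument, then obtain parts (2) and (3) as immediate corollaries. For part (1), fix $\bsu\in\real^{N}$ and consider the linear program $\min_{\bsw}\inn{\bsw}{\bsu}$ subject to $\bsw\in\A_{K}$, i.e.\ $\bsw\geq 0$, $\sum_{i}\bsw^{(i)}=1$, and $\bsw^{(i)}\leq 1/K$ for all $i$. Since $\A_{K}$ is a nonempty compact convex polytope and the objective is linear, the minimum is attained. First I would guess the optimal vertex: put as much mass as the cap $1/K$ allows on the coordinates where $\bsu$ is smallest, that is, the vector $\bsw^{*}$ in \eqref{Eq:aaahhsgsgssfsfsfssgsgsgsss}. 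One checks directly that $\bsw^{*}\in\A_{K}$ (it is nonnegative, each entry is $\le 1/K$, and the $K$ nonzero entries sum to $1$) and that $\inn{\bsw^{*}}{\bsu}=\tfrac1K\sum_{i=1}^{K}\bsu^{(\pi_{\bsu}(i))}=\underline{\Phi}_{K}(\bsu)$, so the claimed minimum value is at least attained by a feasible point.

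The substantive step is the matching lower bound: $\inn{\bsw}{\bsu}\geq \underline{\Phi}_{K}(\bsu)$ for every $\bsw\in\A_{K}$. I would prove this by an exchange/rearrangement argument rather than by writing out the dual. Reindex so that $\bsu^{(\pi_{\bsu}(1))}\le\cdots\le\bsu^{(\pi_{\bsu}(N))}$; writing $v_{j}:=\bsu^{(\pi_{\bsu}(j))}$ and $a_{j}:=\bsw^{(\pi_{\bsu}(j))}$, the constraints become $a_{j}\in[0,1/K]$, $\sum_{j}a_{j}=1$, and we must show $\sum_{j}a_{j}v_{j}\ge\tfrac1K\sum_{j=1}^{K}v_{j}$. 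Since the $v_{j}$ are nondecreasing, any feasible mass vector $a$ can be transformed into $(\tfrac1K,\dots,\tfrac1K,0,\dots,0)$ by repeatedly moving mass from a higher-indexed coordinate to a lower-indexed one that is not yet saturated at $1/K$; each such move does not increase $\sum_{j}a_{j}v_{j}$ because mass is shifted toward a smaller value of $v$. Hence the minimum of $\sum_{j}a_{j}v_{j}$ over the polytope is exactly $\tfrac1K\sum_{j=1}^{K}v_{j}$, which is \eqref{Eq:ddkkrrjjrjrrrrrrr}. Alternatively, and perhaps more cleanly for the write-up, one can invoke the KKT conditions for this LP: complementary slackness forces the optimal $\bsw$ to be supported on the indices with the smallest $\bsu$-values and to saturate the cap there, which is precisely \eqref{Eq:aaahhsgsgssfsfsfssgsgsgsss}; this simultaneously proves (1) and (2).

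Given (1), part (3) follows from Theorem \ref{Thm:ajajssgsgsffsfsssssss}: I must first check that $\underline{\Phi}_{K}$ is a welfare measure, i.e.\ satisfies (A1)--(A5). Monotonicity, positive homogeneity and (A5) are immediate from the formula \eqref{Eq:ajjajsgsgsgffsfsgsfsfsss} (or from the representation $\underline{\Phi}_{K}=\min_{\bsw\in\A_{K}}\inn{\bsw}{\cdot}$); concavity and upper semicontinuity follow because \eqref{Eq:ddkkrrjjrjrrrrrrr} exhibits $\underline{\Phi}_{K}$ as a pointwise minimum of continuous linear functionals over the nonempty closed convex set $\A_{K}\subseteq\Delta([N])$. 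Then the uniqueness clause of Theorem \ref{Thm:ajajssgsgsffsfsssssss} identifies the representing set with $\partial\underline{\Phi}_{K}(0)$, and since $\A_{K}$ is a closed convex subset of $\Delta([N])$ realizing the representation \eqref{Eq:ddkkrrjjrjrrrrrrr}, we conclude $\partial\underline{\Phi}_{K}(0)=\A_{K}$. The main obstacle is the lower-bound direction in part (1); once the rearrangement (or KKT) argument there is in place, the rest is bookkeeping and an appeal to the already-established Theorem \ref{Thm:ajajssgsgsffsfsssssss}.
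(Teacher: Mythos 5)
Your proof is correct, and its overall architecture matches the paper's: establish the LP identity \eqref{Eq:ddkkrrjjrjrrrrrrr} together with the explicit minimizer \eqref{Eq:aaahhsgsgssfsfsfssgsgsgsss}, then obtain part (3) by observing that $\A_{K}$ is a nonempty closed convex subset of $\Delta([N])$ and invoking the uniqueness clause (part 2) of Theorem \ref{Thm:ajajssgsgsffsfsssssss}. The one genuine difference is how you certify optimality of $\bsw^{*}$ in the LP. The paper writes down the KKT system \eqref{Eq:ahjahahssgsfsfsgfsfsfsssss} and exhibits explicit multipliers $\lambda_{*}^{(i)}=\bsu^{(\pi_{\bsu}(K))}-\bsu^{(i)}$, $\mu_{*}^{(i)}=\bsu^{(i)}-\bsu^{(\pi_{\bsu}(K))}$, $\eta_{*}=\bsu^{(\pi_{\bsu}(K))}$, i.e.\ it gives a dual certificate; your primary route is a primal exchange/rearrangement argument that shifts mass toward smaller $\bsu$-values, which is more elementary (no duality needed) and makes the termination of the exchange process the only thing to check. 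Your parenthetical KKT alternative is exactly the paper's proof. One small remark: the exchange argument can be compressed to a one-line inequality, since for $v_{1}\le\cdots\le v_{N}$ and feasible $a$,
\begin{equation*}
\sum_{j}a_{j}v_{j}-\tfrac{1}{K}\sum_{j\le K}v_{j}=\sum_{j\le K}\left(a_{j}-\tfrac{1}{K}\right)v_{j}+\sum_{j>K}a_{j}v_{j}\ge v_{K}\left(\sum_{j}a_{j}-1\right)=0,
\end{equation*}
using $a_{j}\le 1/K$ for $j\le K$ and $v_{j}\ge v_{K}$ for $j>K$. Also note that for part (3) you need not verify (A1)--(A5) by hand: the ``if'' direction of part 1) of Theorem \ref{Thm:ajajssgsgsffsfsssssss} already delivers that any function of the form \eqref{Eq:ddkkrrjjrjrrrrrrr} is a welfare measure, which is how the paper proceeds.
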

\begin{algorithm}[htbp]
	\caption{Supergradient method for $K$-Low Average Welfare (SMWM $K$-Low)}
	\begin{algorithmic}[1]
		\Require
		time horizon $T\in\nat$, step-size sequence $(\gamma_{t})_{t\in [T-1]_{0}}$
		
		\For{$t=0,\ldots,T-1$}
		\For{all agents $i\in [N]$}
		\State Query $\bU^{(i)}(\boldnormtheta_{t})$
		\EndFor
		\State Find a permutation $\pi: [N]\rightarrow [N]$ satisfying:
		\begin{equation*}
		\bU^{(\pi(1))}(\boldnormtheta_{t})\leq \bU^{(\pi(2))}(\boldnormtheta_{t})\leq \ldots\leq \bU^{(\pi(N))}(\boldnormtheta_{t})  
		\end{equation*}
		\State Initialize $\bsg_{t}=0$
		\For{$i\in [K]$}
		\State Query $\tilde{\bsg}^{\pi(i)}_{t}\in\partial U_{\pi(i)}(\boldnormtheta_{t})$
		\State $\bsg_{t}\gets \bsg_{t}+\tilde{\bsg}^{\pi(i)}_{t}/K$
		\EndFor      
		\State Update:
		\begin{equation*}
		\boldnormtheta_{t+1}=\Pi_{\Theta}\left( \boldnormtheta_{t}+\gamma_{t}\bsg_{t}\right) 
		\end{equation*}

		\EndFor
		\State Take the ergodic average of $(\boldnormtheta_{t})_{t\in [T-1]_{0}}$ or the best iterate:
		\begin{equation*}
		\overline{\boldnormtheta}^{\gamma}_{T}=\tfrac{\sum_{t=0}^{T-1}\gamma_{t}\boldnormtheta_{t}}{\sum_{t=0}^{T-1}\gamma_{t}}\quad\text{or}\quad\boldnormtheta_{\max,T}\in\argmax_{t\in [T-1]_{0}}\Phi(\bU(\boldnormtheta_{t}))
		\end{equation*}
	\end{algorithmic}
	\label{Alg:aoaishhjddhhddddeee3}
\end{algorithm}
\setlength{\textfloatsep}{0pt}
Representation \eqref{Eq:ddkkrrjjrjrrrrrrr} and Theorem \ref{Thm:ajajssgsgsffsfsssssss} asserts that $\underline{\Phi}_{K}$ is indeed a welfare measure.

Respective to this class of measure, the resource allocation problem to solve is:
\begin{equation} 
\label{Eq:ajajassgsgssgshshsfffsgssss}
\max_{\boldnormtheta\in\Theta}\underline{\Phi}_{K}(\bU(\boldnormtheta)).
\end{equation}
The solution of above problem gives rise to the allocation strategy optimizing the total utility of the $K$ worst-off agents. Furthermore,  
notice that for $K=1$ above problem coincides with the max-min fairness problem \eqref{Eq:jaajjagsgsgsfsfssgsfssfsss}, and that for $K=N$ \eqref{Eq:ajajassgsgssgshshsfffsgssss} coincides with maximum average utilities problem \eqref{Eq:ajajjsgsgsfsffsfggsssssss}. Those observations assert that we can balance between \eqref{Eq:jaajjagsgsgsfsfssgsfssfsss} and \eqref{Eq:ajajassgsgssgshshsfffsgssss} by choosing $K$ between $1$ and $N$. For illustration, we provide a numerical simulations for this aspect in the next section.

Since $\underline{\Phi}_{K}$ is a welfare measure, we can use Algorithm \ref{Alg:aoaishhjddhhddddeee2} to solve \eqref{Eq:ajajassgsgssgshshsfffsgssss}. In the following, we specify the quite general steps of Algorithm \ref{Alg:aoaishhjddhhddddeee2} by specifying the computation of the supergradient of $\underline{\Phi}_{K}(\bU)$ for a given agents' utilities $\bU$. To do this, we use Theorem \ref{Thm:auauasgsgsfsfsssss} to compute a supergradient of $\underline{\Phi}_{K}$ at $\bU(\boldnormtheta)$. To this end, Theorem \ref{Thm:auauasgsgsfsfsssss} asserts to find $\bsw(\boldnormtheta)\in\partial \underline{\Phi}_{K}(0)$ for which $\Phi(\bU(\boldnormtheta))=\inn{\bsw_{\boldnormtheta}}{\bU(\boldnormtheta)}$. So, according to 2) in Proposition \ref{Prop:LowK}, a particular choice of such a weight vector is $\bsw^{(i)}(\boldnormtheta)=1/K$ if $i\in \pi_{\bU(\boldnormtheta)}([K])$ and $\bsw^{(i)}(\boldnormtheta)=0$ else. Now, we can present the corresponding algorithm, called the Supergradient method for $K$-Low Average Welfare (SMWM $K$-Low), in Algorithm \ref{Alg:aoaishhjddhhddddeee3}.

\section{Numerical Simulation: Low-K Maximization of Wireless Links QoS} 
\begin{figure}[htbp]
	\begin{center}
		\includegraphics[scale=0.7, trim={5.8cm 11.5cm 5.5cm 11.5cm}, clip]{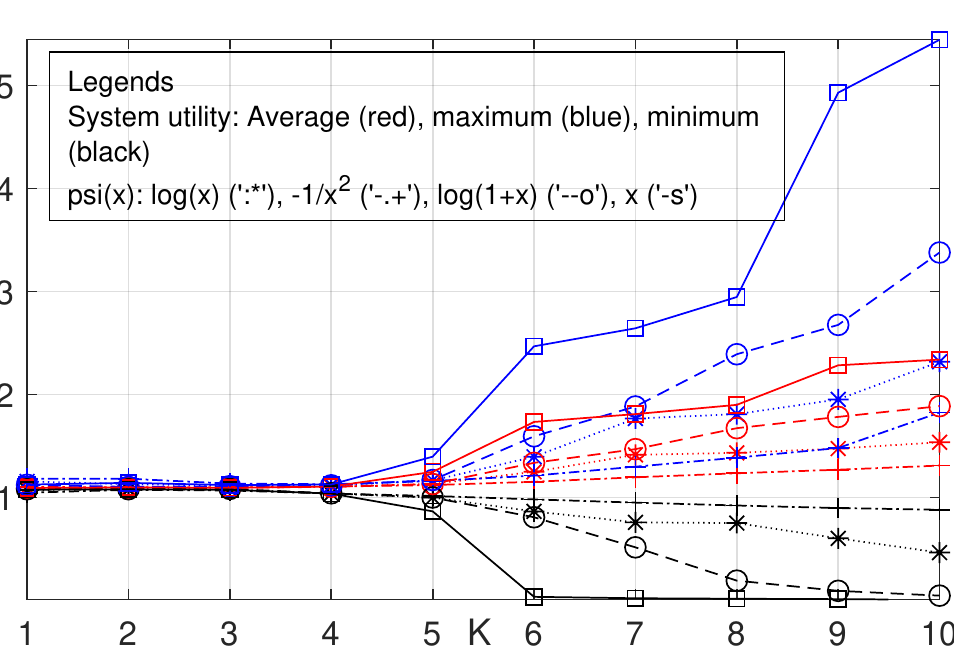}
	\end{center}
	\caption{Performance comparison of the $K$-low average of SMWM $K$-low (Algorithm \ref{Alg:aoaishhjddhhddddeee3}) respective the SINR for different $K\in [10]$ and $\psi$. 
	}
	\label{Fig:aoaojsjsjjddd1}
\end{figure}
\setlength{\textfloatsep}{0pt}  


To illustrate our results, we apply them to the specific application of power control in wireless network in Section \ref{Sec:ajjaasggsgsgsss}. Our interest is on solving the problem \eqref{Eq:ajajassgsgssgshshsfffsgssss}, where the parameter $\theta$ is the log transmission power of the agents, the constraint set $\Theta$ is equal to $\log\P$, with $\P$ denotes the power constraints \eqref{Eq:jjaaggsffsfsfsgsgsgsss}, the utility of the agent $i\in[N]$ is given by $\bU^{(i)}(\theta)=\psi(\SINR^{(i)}_{e}(\bss))$ with $\SINR$ is given as in \eqref{Eq:jaajjssgsgsgsgsfsfssss} and $\psi$ denotes the function specifying the QoS of the agents from their SINR (see \eqref{Eq:aajajssgsgsgsfsfsgssffsfsss}). 
In solving this problem, we use SMWM $K$-Low algorithm (Algorithm \ref{Alg:aoaishhjddhhddddeee3}). 
\paragraph*{System Model Parameters}
In our numerical simulations, we set the number of agents by $N=10$. We choose the communication gain parameters $V_{k,l}$, $k,l\in [10]$ randomly independently with the following specification for any $k\in [10]$: $V_{k,k}$ is uniformly distributed in the interval $[1,3]$ and $V_{k,l}$ is exponentially distributed with mean $1/10$. We set the noise power as $\sigma_{k}^{2}=1/5$ for all agents $k\in [10]$. 
\paragraph*{Optimization Parameters}  
The wireless network system in our simulations is subject to the power constraint \eqref{Eq:jjaaggsffsfsfsgsgsgsss} with $0.05\leq \bsp^{(k)}\leq 1$. Thus the constraint set $\log \P=\lrbrace{\bss\in\real^{N}:\bss^{(k)}\in [e^{0.05},1]}$. In our simulations we consider not only the choices of $\psi$ satisfying Assumption \ref{Ass:Log}, such as $\psi(x)=\log(x)$ and $\psi(x)=-1/x^{2}$, but also other choices of $\psi$ popular in applications such as $\psi(x)=\log(1+x)$ and $\psi(x)=x$.  

\paragraph*{Algorithmic Parameters}
The theoretical results in this paper asserts that the convexity of agents' utility is one ingredient for the success of SMWM $K$-Low. With $\psi$ satisfying Assumption \ref{Ass:Log} (e.g., $\psi(x)=\log(x)$ and $\psi(x)=-1/x^{2}$), one can show that the utility of agent $k$ given by $\bU^{(i)}(\bss)=\rmq^{(k)}_{e}(\bss)$ is a convex function in the resource allocation variable $\bss$. Nevertheless, we test for completeness also SMWM $K$-Low with $\psi$, which does not satisfy Assumption \ref{Ass:Log} but popular in practice. Furthermore, as the constraint set $\log \P$ specified in the last paragraph is convex and compact and by inspecting the Hessian of the utility functions, one can show that the requirements in Theorem \ref{Thm:akakasshshssgsggs} (provided that $\psi$ satisfies Assumption \ref{Ass:Log}) is fulfilled. Therefore, the success of SWMW is theoretically guaranteed. In our simulations, we consider the time horizon $T=2000$ and the fixed step size $\gamma_{t}=5/\sqrt{T}$, $t\in [T-1]_{0}$. Moreover, we always set initial iterate $\bss_{0}=0$. 

\begin{figure}[htbp]
	\begin{center}
		\includegraphics[scale=0.7, trim={5.0cm 11.5cm 5.5cm 11.5cm}, clip]{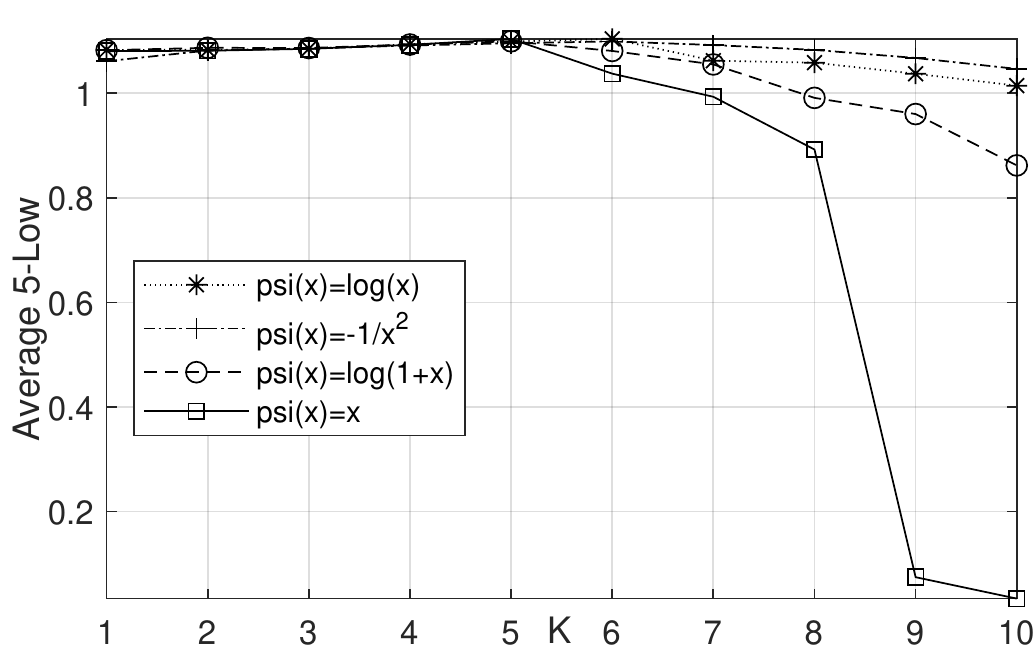}
	\end{center}
	\caption{Average $5$-low of the SINR comparison of SMWM $K$-low (Algorithm \ref{Alg:aoaishhjddhhddddeee3}) for different $K\in [10]$ and $\psi$. 
	}
	\label{Fig:aoaojsjsjjddd2}
\end{figure}
\setlength{\textfloatsep}{0pt}
\paragraph*{Simulation Results -- Average performance and Max-Min Fairness}
Figure \ref{Fig:aoaojsjsjjddd1} shows the performance of the output ($\boldnormtheta_{\max,T}$) of SMWM $K$-low, for different choices of averaging numbers $K\in [N]=[10]$ and $\psi$, in terms of the average (red lines), minimum (black lines), and maximum (blue lines) SINR of the agents'. The choices of $\psi$ range from those satisfying Assumption \ref{Ass:Log}, i.e. $\psi(x)=\log(x)$ (dotted lines with star markers) and $\psi(x)=-1/x^{2}$ (dotted-dashed lines with plus markers), and those not satisfying Assumption \ref{Ass:Log}, i.e. $\psi(x)=\log(1+x)$ (dashed lines with circle markers) and $\psi(x)=x$ (straight lines with square markers). One can see in Figure \ref{Fig:aoaojsjsjjddd1} that SWMW $K$-Low for $K=10$, corresponding to the max-average utility optimization, has the highest average SINR for any choices of $\psi$, which is to be expected as SWMW $10$-Low approximately provides the optimal resource allocation for the average utility. However, this superiority is of cost of inferiority of some agents' SINR, as the minimum of this quantity is at lowest for this choice of $K$. With decreasing $K$, we observe the tendency of the tradeoff in form of decreasing total SINR and increasing minimum of agents' SINR. However, the performance of SWMW $1$-low in our simulation is (slightly) sub-optimal as it not yields the maximum lowest utility of the agents upon all choices of $K$. This might be due to lack of (strong) convexity making the corresponding supergradient method slow. To solve the optimal resource allocation problem respective to the average $1$-low utility, one may either increase the time horizon or use another method given in the literature. Based on the numerical observation, one may alternatively use SWMW with small $K\neq 1$.

\paragraph*{Simulation Results-- Optimality of Average $K$-Low}
To check whether SWMW $K$-low produces optimal resource allocation for average $K$-low of the utilities, we check the performance of the output of SWMW $K$-low respective to the average $5$-low as system performance measure. We plot our result in Figure \ref{Fig:aoaojsjsjjddd2}. There, we observe that irrespective of the choice of $\psi$, the highest value of the average $5$-low utilities is achieved by utilizing SWM $K$-low for $K=5$ verifying our theoretical result SWMW $K$-low produces an approximate solution of the problem \eqref{Eq:ajajassgsgssgshshsfffsgssss}.

\section{Conclusion}
In this paper, we have presented a general notion of a performance/welfare measure of a resource-driven competitive multi-agent system. This gives rise not only to the popular system optimality notions, such as the popular average optimality and the max-min fairness, but also to interesting non-standard system optimality notions, such as the optimality of the subset of worst-off agents. One clear advantage of the latter which is particularly obvious from our numerical investigations is that it provides an alternative tradeoff between the aforementioned popular optimality notions. Furthermore, we were able, by means of convex analytical method, to relate an abstract welfare measure to the popular average/total performance measure. This provides a way to extend techniques given in the literature using the latter performance measure, so that they can handle resource allocation objective respective to the former. One particular example given in this work is the supergradient method for seeking an optimal allocation respective to a general welfare principle (SMWM). Interesting directions for the future are investigations on the structure of Low $K$-average welfare optimization in dependence of the communication gain matrix, and on the distributed implementation and acceleration of the first-order algorithms given in this work.  

\begin{thebibliography}{10}
\providecommand{\url}[1]{#1}
\csname url@samestyle\endcsname
\providecommand{\newblock}{\relax}
\providecommand{\bibinfo}[2]{#2}
\providecommand{\BIBentrySTDinterwordspacing}{\spaceskip=0pt\relax}
\providecommand{\BIBentryALTinterwordstretchfactor}{4}
\providecommand{\BIBentryALTinterwordspacing}{\spaceskip=\fontdimen2\font plus
\BIBentryALTinterwordstretchfactor\fontdimen3\font minus
  \fontdimen4\font\relax}
\providecommand{\BIBforeignlanguage}[2]{{%
\expandafter\ifx\csname l@#1\endcsname\relax
\typeout{** WARNING: IEEEtran.bst: No hyphenation pattern has been}%
\typeout{** loaded for the language `#1'. Using the pattern for}%
\typeout{** the default language instead.}%
\else
\language=\csname l@#1\endcsname
\fi
#2}}
\providecommand{\BIBdecl}{\relax}
\BIBdecl

\bibitem{StanczakBoche2009}
S.~Stanczak, M.~Wiczanowski, and H.~Boche, \emph{Fundamentals of Resource
  Allocation in Wireless Networks: Theory and Algorithms}, 2nd~ed.\hskip 1em
  plus 0.5em minus 0.4em\relax Springer, 2009.

\bibitem{LeeChuah2014}
Y.~L. {Lee}, T.~C. {Chuah}, J.~{Loo}, and A.~{Vinel}, ``{R}ecent {A}dvances in
  {R}adio {R}esource {M}anagement for {H}eterogeneous {LTE/LTE-A} {N}etworks,''
  \emph{IEEE Communications Surveys Tutorials}, vol.~16, no.~4, pp. 2142--2180,
  2014.

\bibitem{Ahmed2005}
M.~H. {Ahmed}, ``Call admission control in wireless networks: A comprehensive
  survey,'' \emph{IEEE Comm. Sur. Tut.}, vol.~7, no.~1, pp. 49--68, 2005.

\bibitem{Mach2015}
P.~{Mach}, Z.~{Becvar}, and T.~{Vanek}, ``{In-Band} {Device-to-Device}
  {C}ommunication in {OFDMA} {C}ellular {N}etworks: {A} {S}urvey and
  {C}hallenges,'' \emph{IEEE Communications Surveys Tutorials}, vol.~17, no.~4,
  pp. 1885--1922, 2015.

\bibitem{Kluegel2018}
M.~Kl\"{u}gel and W.~{Kellerer}, ``The {Device-to-Device} {R}euse
  {M}aximization {P}roblem {W}ith {P}ower {C}ontrol,'' \emph{IEEE Trans. Wirel.
  Commun.}, vol.~17, no.~3, pp. 1836--1848, 2018.

\bibitem{Matthiesen2020}
B.~{Matthiesen}, A.~{Zappone}, K.~L. {Besser}, E.~A. {Jorswieck}, and
  M.~{Debbah}, ``A {G}lobally {O}ptimal {Energy-Efficient} {P}ower {C}ontrol
  {F}ramework and {I}ts {E}fficient {I}mplementation in {W}ireless
  {I}nterference {N}etworks,'' \emph{IEEE Trans. on Signal Process.}, vol.~68,
  pp. 3887--3902, 2020.

\bibitem{Bertsekas1992}
D.~Bertsekas and R.~Gallager, \emph{Data Networks (2nd Ed.)}.\hskip 1em plus
  0.5em minus 0.4em\relax USA: Prentice-Hall, Inc., 1992.

\bibitem{Zheng2016}
L.~{Zheng}, Y.~.~P. {Hong}, C.~W. {Tan}, C.~{Hsieh}, and C.~{Lee}, ``Wireless
  {Max-Min} {U}tility {F}airness {W}ith {G}eneral {M}onotonic {C}onstraints by
  {Perron-Frobenius} {T}heory,'' \emph{IEEE TrIT}, vol.~62, no.~12, pp.
  7283--7298, 2016.

\bibitem{Sadeghi2018}
M.~{Sadeghi}, E.~{Bj\"{o}rnson}, E.~G. {Larsson}, C.~{Yuen}, and T.~{Marzetta},
  ``Joint unicast and multi-group multicast transmission in massive mimo
  systems,'' \emph{IEEE Trans. on Wirel. Comm.}, vol.~17, no.~10, pp.
  6375--6388, 2018.

\bibitem{Rawls1971}
J.~Rawls, \emph{{A} {T}heory of {J}ustice}.\hskip 1em plus 0.5em minus
  0.4em\relax Harvard University Press, 1971.

\bibitem{Massoulie2002}
L.~Massouli{\'e} and J.~Roberts, ``Bandwidth sharing: Objectives and
  algorithms,'' \emph{IEEE/ACM Trans. Netw.}, vol.~10, no.~3, pp. 320--328,
  2002.

\bibitem{Kelly1998}
F.~P. Kelly, A.~K. Maulloo, and D.~K.~H. Tan, ``Rate {C}ontrol for
  {C}ommunication {N}etworks: {S}hadow {P}rices, {P}roportional {F}airness and
  {S}tability,'' \emph{J. of the Op. Res. Soc.}, vol.~49, no.~3, pp. 237--252,
  1998.

\bibitem{Mo2000}
J.~{Mo} and J.~{Walrand}, ``Fair end-to-end window-based congestion control,''
  \emph{IEEE/ACM Trans. on Netw.}, vol.~8, no.~5, pp. 556--567, 2000.

\bibitem{Zabini2017}
F.~{Zabini}, A.~{Bazzi}, B.~M. {Masini}, and R.~{Verdone}, ``Optimal
  {P}erformance {V}ersus {F}airness {T}radeoff for {R}esource {A}llocation in
  {W}ireless {S}ystems,'' \emph{IEEE Trans. Wirel. Commun.}, vol.~16, no.~4,
  pp. 2587--2600, 2017.

\bibitem{Tampubolon2021}
E.~Tampubolon and H.~Boche, ``Welfare {M}easure for {R}esource {A}llocation
  with {A}lgorithmic {I}mplementation: {B}eyond {A}verage and {Max-Min},''
  \emph{ArXiV Preprint}, 2021.

\bibitem{Rockafellar1970}
R.~T. Rockafellar, \emph{Convex Analysis}.\hskip 1em plus 0.5em minus
  0.4em\relax Princeton University Press, 1970.

\bibitem{TanPalomar2007}
C.~W. {Tan}, D.~P. {Palomar}, and M.~{Chiang}, ``{E}xploiting {H}idden
  {C}onvexity for {F}lexible and {R}obust {R}esource {A}llocation in {C}ellular
  {N}etworks,'' in \emph{IEEE INFOCOM}, 2007, pp. 964--972.

\bibitem{Naik2011}
H.~{Boche}, S.~{Naik}, and T.~{Alpcan}, ``Characterization of convex and
  concave resource allocation problems in interference coupled wireless
  systems,'' \emph{IEEE Transactions on Signal Processing}, vol.~59, no.~5, pp.
  2382--2394, 2011.

\bibitem{BoydBook2004}
S.~Boyd and L.~Vandenberghe, \emph{Convex Optimization}.\hskip 1em plus 0.5em
  minus 0.4em\relax USA: Cambridge University Press, 2004.

\bibitem{lan2020first}
G.~Lan, \emph{First-order and Stochastic Optimization Methods for Machine
  Learning}.\hskip 1em plus 0.5em minus 0.4em\relax Springer Nature, 2020.

\end{thebibliography}

\appendix
\subsection{Basic Notions and Notations for proofs }
In the proof, we make use of the indicator function of a convex set:
\begin{equation*}
\delta_{\X}(\bsx)=\begin{cases}
0,\quad &\bsx\in\X\\
-\infty, &\bsx\notin\X
\end{cases}
\end{equation*}
Furthermore in the proofs we heavily make use of the following concept:
\begin{definition}[Concave Conjugate and biconjugate]
	Let be $\psi:\real^{D}\rightarrow\overline{\real}$ be proper. The concave conjugate of $\psi$ is defined as $\psi^{*}:\real^{D}\rightarrow\overline{\real}$ given by:
	\begin{equation*}
	\psi^{*}(y)=\inf_{\bsx\in\real^D{}}\lrbrace{\inn{\bsx}{\bsy}-\psi(\bsx)}.
	\end{equation*}
	The biconjugate $\psi^{**}$ of $\psi$ is defined as the concave conjugate of $\psi^{*}$. 
\end{definition}
Helpful for our approach is the following well-known facts in convex analysis (see e.g.,\cite{Rockafellar1970})
\begin{proposition}
	\label{Prop:Folklore}
	Let be $\psi:\real^{D}\rightarrow\overline{\real}$ be proper, concave, and upper semi-continuous. Then $\psi^{*}$ is proper and concave.
	\begin{enumerate}
		\item $\psi^{**}=\psi$
		\item The following statements are equivalent:
		\begin{enumerate}
			\item $\inn{\bsx}{\bsy}=\rmf(\bsx)+\rmf^{*}(\bsy)$
			\item $\bsy\in\partial \rmf(\bsx)$
			\item $\bsx\in\partial \rmf^{*}(\bsy)$
		\end{enumerate}
	\end{enumerate}
\end{proposition}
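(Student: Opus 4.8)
The plan is to deduce all three claims from the classical Fenchel--Moreau theorem for \emph{convex} functions, treating everything else as sign bookkeeping (I read the $\rmf$ of item 2 as the same function $\psi$). Set $g:=-\psi$; then $g$ is proper, convex, and lower semi-continuous, and with the usual convex conjugate $g^{\star}(\boldsymbol{v}):=\sup_{\bsx}\{\inn{\bsx}{\boldsymbol{v}}-g(\bsx)\}$ one has the sign-reversal identity
\begin{equation*}
\psi^{*}(\bsy)=\inf_{\bsx}\{\inn{\bsx}{\bsy}-\psi(\bsx)\}=-\sup_{\bsx}\{\inn{\bsx}{-\bsy}-g(\bsx)\}=-g^{\star}(-\bsy).
\end{equation*}
Because $g^{\star}$ is proper, convex, and lower semi-continuous and $\bsy\mapsto-\bsy$ is affine, the right-hand side is proper and concave, which is the opening assertion. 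Iterating the same identity and substituting $\boldsymbol{z}=-\bsy$ gives $\psi^{**}(\bsx)=-g^{\star\star}(\bsx)$, and the convex Fenchel--Moreau identity $g^{\star\star}=g$ then yields $\psi^{**}=-g=\psi$, which is item 1.

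For item 2 I would argue directly and exploit the symmetry granted by item 1. The concave Fenchel--Young inequality $\psi(\bsx)+\psi^{*}(\bsy)\le\inn{\bsx}{\bsy}$ is immediate from the definition of $\psi^{*}$ as an infimum. Equality in (a) holds precisely when $\bsx$ attains the infimum defining $\psi^{*}(\bsy)$, i.e. when $\psi(\bsx')\le\psi(\bsx)+\inn{\bsy}{\bsx'-\bsx}$ for every $\bsx'$; by the paper's definition of a supergradient this is exactly $\bsy\in\partial\psi(\bsx)$, giving (a)$\iff$(b). For (a)$\iff$(c) I would apply the same equivalence to the conjugate $\psi^{*}$, which is itself proper, concave, and upper semi-continuous (being an infimum of affine functions): equality $\inn{\bsx}{\bsy}=\psi^{*}(\bsy)+\psi^{**}(\bsx)$ holds iff $\bsx\in\partial\psi^{*}(\bsy)$. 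Since item 1 gives $\psi^{**}=\psi$, this equality coincides with (a), so (a)$\iff$(c) and all three are equivalent.

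The only genuine content is the convex Fenchel--Moreau identity $g^{\star\star}=g$ invoked in item 1; everything else is affine sign-chasing together with the attainment condition in an infimum. A fully self-contained proof would therefore have its main obstacle in showing that a proper, lower semi-continuous convex $g$ equals the supremum of its affine minorants, which rests on applying the separating-hyperplane theorem to the closed convex epigraph of $g$. Since the paper labels these as known facts from \cite{Rockafellar1970}, I would cite them rather than reprove the separation argument. The one point to keep in view throughout is properness: it is what rules out the $+\infty-\infty$ pathologies in the conjugates and in the Fenchel--Young inequality, and it is supplied by the standing hypothesis on $\psi$.
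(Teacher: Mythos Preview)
The paper does not prove this proposition at all: it is labeled as ``well-known facts in convex analysis'' and simply cited to \cite{Rockafellar1970}. Your proposal is a correct and standard derivation of the concave version from the convex Fenchel--Moreau theorem via the sign change $g=-\psi$, together with the usual attainment argument for the Fenchel--Young equality; there is nothing to compare it against in the paper itself.
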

\subsection{Missing Proofs in Section \ref{Sec:ajajgsgsgsffssgsfsgsfgssss}}
\label{Subsec:Proof}
\begin{proof}[Proof of Theorem \ref{Thm:ajajssgsgsffsfsssssss}]
	If \eqref{Eq:ajajajshsshsggsgsgsgsssss} holds, then it is straightforward to show that $\Phi$ is a welfare measure. Now, we show the reverse statement.

	Suppose that $\Phi$ is a welfare measure. As $\Phi$ is a proper upper semi-continuous convex function, it follows that:
	\begin{equation*}
	\Phi(\bsu)=\Phi^{**}(\bsu)=\inf_{\bsw\in\real^{N}}\lrbrace{\inn{\bsw}{\bsu}-\Phi^{*}(\bsw)},
	\end{equation*}
	where the first equality follows from Fenchel-Moreau theorem (see $1)$ in Proposition \ref{Prop:Folklore}) and the second equality from the definition of the concave conjugate of a function (here: $\Phi^{*}$).
	
	As $\Phi$ is homogeneous, it follows that:
	\begin{equation}
	\label{Eq:aajsssggssfsfsfssss}
	\Phi^{*}(\bsw)=\delta_{\partial\Phi(0)}(\bsw),
	\end{equation}
	Indeed, for any $\lambda> 0$, it holds:
	\begin{equation*}
	\begin{split}
	&\Phi^{*}(\bsw)=\inf_{\bsu\in\real^{N}}\lrbrace{\inn{\bsu}{\bsw}-\Phi(\bsu)}=\inf_{\bsu\in\real^{N}}\lrbrace{\inn{\bsu}{\lambda \bsw}-\Phi(\lambda \bsu)}\\
	&=\inf_{\bsu\in\real^{N}}\lrbrace{\lambda\inn{\bsu}{ \bsw}-\lambda\Phi( \bsu)}=\lambda\Phi^{*}(\bsw),
	\end{split}
	\end{equation*}
	where the second equality follows from the change of optimization variable $\bsu\rightarrow\lambda \bsu$ preserving the optimization problem as $\lambda>0$, and the third equality from the assumption that $\Phi$ is homogeneous. Letting $\lambda\rightarrow\infty$ in above equation, it follows that if $\Phi^{*}(\bsw)\neq 0$ then either $\Phi^{*}(\bsw)=\infty$ or $\Phi^{*}(\bsw)=-\infty$. The former case can not occur as $\Phi^{*}$ is proper (see Proposition \ref{Prop:Folklore}). Thus for any $\bsu\in \real^{N}$, either $\Phi^{*}(\bsw)=0$ or $\Phi^{*}(\bsw)=\infty$. So, for showing \eqref{Eq:aajsssggssfsfsfssss}, it remains to specify the set on which $\Phi^{*}$ takes zero values. For this sake, notice that as $\Phi$ is homogeneous, we have that $\Phi(0)=0$. This asserts, that $\bsw\in\real^{N}$ satisfies $\Phi^{*}(\bsw)=0$ if and only if $\inn{0}{\bsw}=\Phi(0)+\Phi^{*}(\bsw)$. Consequently, by $2)$ in Proposition \ref{Prop:Folklore}, this holds if and only if $\bsw\in\partial \Phi(0)$. 
	
	As the consequence of \eqref{Eq:aajsssggssfsfsfssss}, we have that $\Phi$ has the representation \eqref{Eq:ajajajshsshsggsgsgsgsssss}, where $\U$ is given explicitly by $\partial\Phi(0)$. It is well-known that the superdifferential set of a function at a point in the domain of the function is a non-empty closed and convex set. Therefore as $\Phi$ is a real-valued function and thus has the domain equal to the whole $\real^{N}$, to show the remaining statement in $1)$, we need to establish the fact that $\partial\Phi(0)$ is a subset of the simplex. For this sake, take an arbitrary $\bsw\in\partial\Phi(0)$. We have by the definition of the supergradient, and the fact that $\Phi(0)=0$ following from the positive homogeneity of $\Phi$:
	\begin{equation*}
	\label{Eq:ajajashssggsgsgsssss}
	\Phi(\pm 1)\leq \Phi(0)+\inn{\bsw}{\pm 1-0}=\inn{\bsw}{\pm 1}
	\end{equation*}  
	Thus, since the welfare measure $\Phi$ satisfies $\Phi(\pm 1)=\pm\Phi(1)$, we have $\sum_{i=1}\bsw^{(i)}=1$. Furthermore for any $i\in[N]$, we have by replacing $\pm 1$ in \eqref{Eq:ajajashssggsgsgsssss} by $e_{i}$ and by noticing that monotonicity of $\Phi$ yields $\Phi(e_{i})\geq 0$, that $\bsw^{(i)}\geq 0$, yielding the fact that $\bsw\in\Delta([N])$.
	
	At last we show the uniqueness statement in $2)$. For this sake, suppose that the welfare measure $\Phi$ can be represented as in \eqref{Eq:ajajajshsshsggsgsgsgsssss} by a non-empty closed convex subset $\U\subseteq\Delta([N])$ other than $\partial\Phi(0)$. Then, we have for any $\bsu\in\real^{N}$:
	\begin{equation*}
	(\delta_{\U})^{*}(\bsu)=\min_{\bsw\in\U}\inn{\bsu}{x}=\min_{x\in\tilde{\partial\Phi(0)}}\inn{\bsu}{x}=(\delta_{\partial\Phi(0)})^{*}(\bsu),
	\end{equation*}
	where the second equality follows from the previously proven fact that the welfare measure $\Phi$ can be represented as in \eqref{Eq:ajajajshsshsggsgsgsgsssss}, where the optimization is over the set $\partial\Phi(0)$.  
	Consequently $\delta_{\U}^{**}=\delta_{\partial\Phi(0)}^{**}$. As $\U$ and $\partial\Phi (0)$ are convex and closed, it follows that $\delta_{\U}$ and $\delta_{\partial\Phi(0)}$ are concave and upper semi-continuous. Consequently, we have by Fenchel-Moreau Theorem (see 1) in Proposition 	\ref{Prop:Folklore}), that $\delta_{\U}^{**}=\delta_{\U}$ and $\delta_{\partial\Phi (0)}^{**}=\delta_{\partial \Phi(0)}$. Combining all the results, we have $\delta_{\U}=\delta_{\partial\Phi(0)}$ contradicting with the assumption $\U\neq\partial\Phi(0)$ 
\end{proof}

\begin{proof}[Proof of Lemma \ref{Lem:aajssgsgsgsfsfsss}]
	Let $\bsu\in\real^{D}$ be arbitrary. By 2) in Proposition \ref{Prop:Folklore}, we have that $w\in\partial\Phi(\bsu)$ if and only if:
	\begin{equation}
	\label{Eq:ahahssgsfsfsfsgsffsssfss}
	\inn{\bsu}{\bsw}=\Phi(\bsu)+\Phi^{*}(\bsw)=\Phi(\bsu)+\delta_{\partial\Phi(0)}(\bsw),
	\end{equation}
	where we use for the second equality the identity $\Phi^{*}=\delta_{\partial\Phi(0)}$ shown in Theorem \ref{Thm:ajajssgsgsffsfsssssss}. From \eqref{Eq:ahahssgsfsfsfsgsffsssfss}, it follows that in order $\bsw\in\partial\Phi(\bsu)$, it is necessary that $\bsw\in\partial \Phi (0)$. For this kind of $\bsw$, \eqref{Eq:ahahssgsfsfsfsgsffsssfss} yields that $\bsw\in\partial\Phi(\bsu)$ if and only if $\inn{\bsw}{\bsu}=\Phi(\bsu)$. 
\end{proof}
\section{Missing Proofs in Section \ref{Sec:SupGrad}}
\label{Sec:}
\begin{proof}[Proof of Theorem \ref{Thm:auauasgsgsfsfsssss}]
	The fact that $\Phi(\bU)$ is concave is an implication of the fact that $\Phi(\bU)$ is the pointwise minimum of concave functions (see \eqref{Eq:ajajajshsshsggsgsgsgsssss}). The fact that $\Phi(\bU)$ is proper is clear.

	It is well known that the subdifferential of a proper convex function on the relative interior (which is also non-empty) of the domain of the function is non-empty. Thus above statement is not a vacuous truth. Now, it follows Lemma \ref{Lem:aajssgsgsgsfsfsss} that $w\in\partial \Phi(0)$ satisfying \eqref{Eq:aahahsgsgsfsfssgfsssgsgs} is contained in $\partial_{\bU(\boldnormtheta)} \Phi(\bU(\boldnormtheta))$. Consequently by the definition of supergradient: 
	\begin{equation}
	\label{Eq:ajajasshshshsggsgsshshgshsss}
	\Phi(\bU(\tilde{\boldnormtheta}))\leq \Phi(\bU(\boldnormtheta))+\inn{g}{\bU(\tilde{\boldnormtheta})-\boldnormtheta(\boldnormtheta)}.
	\end{equation}
	Moreover by Theorem \ref{Thm:ajajssgsgsffsfsssssss}, we have that $\partial\Phi(0)\subseteq\Delta([N])$. As a consequence, we have that $w\geq 0$. This and the fact that $\tilde{g}^{(i)}\in\partial \bU^{(i)}(\boldnormtheta)$ yields:
	\begin{equation}
	\label{Eq:ajajasshshshsggsgsshshgshsss2}
	\begin{split}
	\inn{\bsw}{\bU(\tilde{\boldnormtheta})-\bU(\boldnormtheta)}&=\sum_{i=1}^{N}\bsw^{(i)}(\bU^{(i)}(\tilde{\boldnormtheta})-\bU^{(i)}(\boldnormtheta))\\
	&\leq \inn{\sum_{i=1}^{N}\bsw^{(i)}\tilde{\bsg}^{(i)}}{\tilde{\boldnormtheta}-\boldnormtheta}
	\end{split}
	\end{equation}
	Combining \eqref{Eq:ajajasshshshsggsgsshshgshsss} and \eqref{Eq:ajajasshshshsggsgsshshgshsss2}, we obtain the $\Phi(\bU(\tilde{\boldnormtheta}))\leq \Phi(\bU(\boldnormtheta))+ \inn{\sum_{i=1}^{N}\bsw_{i}\tilde{\bsg}^{(i)}}{\tilde{\boldnormtheta}-\boldnormtheta}$, showing that $\sum_{i=1}^{N}\bsw^{(i)}\tilde{\bsg}^{(i)}\in\partial_{\boldnormtheta}\Phi(\bU(\boldnormtheta))$. 
\end{proof}
\begin{proof}[Proof of Lemma \ref{Lem:bounded}]
	We have:
	\begin{equation*}
	\norm{\bsg_{t}}_{2}=\norm{\sum_{i=1}^{N}\bsw^{(i)}_{t}\tilde{\bsg}_{t}^{(i)}}_{2}\leq \sum_{i=1}^{N}\bsw^{(i)}_{t}\norm{\tilde{\bsg}_{t}^{(i)}}_{2}. 
	\end{equation*}
	As $\tilde{\bsg}_{t}^{(i)}$ is a supergradient of $\bU_{i}$, we have by the uniform boundedness assumption:
	\begin{equation*}
	\sum_{i=1}^{N}\bsw^{(i)}_{t}\norm{\tilde{\bsg}_{t}^{(i)}}_{2}\leq \sum_{i=1}^{N}\bsw^{(i)}_{t}\bsM^{(i)}.
	\end{equation*} 
	As $\bsw_{t}\in\partial\Phi(0)$, we obtain the desired statement by combining above inequalities and by taking the corresponding maximum.
\end{proof}

\begin{proof}[Proof Proposition \ref{Prop:LowK}]
	The KKT condition for the optimization problem is given by:
	\begin{equation}
	\label{Eq:ahjahahssgsfsfsgfsfsfsssss}
	\begin{split}
	&\bsu^{(i)}+\lambda_{*}^{(i)}-\mu_{*}^{(i)}+\eta_{*}=0,~\forall i\in [D]\\
	&\lambda_{*}^{(i)}\left( \bsw_{*}^{(i)}-\tfrac{1}{K}\right) =0,~\mu_{*}^{(i)}\bsw_{*}^{(i)}=0,~\forall i\in [D]\\
	&\lambda_{*}^{(i)}\geq 0,~\mu_{*}^{(i)}\geq 0,~\forall i\in [D].
	\end{split}
	\end{equation}
	As the optimization problem \eqref{Eq:ddkkrrjjrjrrrrrrr} is a linear problem, and therefore convex, it follows that for $(w_{*},\lambda_{*},\mu_{*},\eta_{*})\in\real^{D}\times\real^{D}\times\real^{D}\times \real$, $w_{*}$ is a solution of \eqref{Eq:ddkkrrjjrjrrrrrrr}. Let be $w_{*}$ given by \eqref{Eq:aaahhsgsgssfsfsfssgsgsgsss}, $\eta_{*}:=u_{\pi_{u}(K)}$, $\lambda_{*}$ and $\mu_{*}$ given by 
	\begin{equation*}
	\begin{split}
	&\lambda^{(i)}_{*}:=\bsu^{(\pi_{u}(K))}-\bsu^{(i)}\quad\text{and}\quad \mu^{(i)}_{*}:=0,\quad \forall i\in \pi_{\bsu}([K])\\
	&\lambda^{(i)}_{*}:=0\quad\text{and} \quad\mu^{(i)}_{*}:=\bsu^{(i)}\bs-u^{(\pi_{\bsu}(K))},\quad \forall i\notin \pi_{\bsu}([K])
	\end{split}.
	\end{equation*}
	Immediately, one checks that this choice of $(\bsw_{*},\lambda_{*},\mu_{*},\eta_{*})$ satisfies \eqref{Eq:ahjahahssgsfsfsgfsfsfsssss} and thus $\bsw_{*}$ is a solution of the optimization problem given by \eqref{Eq:ddkkrrjjrjrrrrrrr}. Setting this $\bsw_{*}$ into the objective of the problem \eqref{Eq:ddkkrrjjrjrrrrrrr}, we obtain the identity in \eqref{Eq:ddkkrrjjrjrrrrrrr}. Now, it remains to show the last statement. Clearly, $\A$ is non-empty, closed, and convex. Consequently, the representation \eqref{Eq:ahjahahssgsfsfsgfsfsfsssss} and 2) in Theorem \ref{Thm:ajajssgsgsffsfsssssss} asserts that $\partial\underline{\Phi}_{K}(0)=\A$, as desired.
\end{proof}

\end{document}